\newcommand{\Z}{\ensuremath{\mathbb{Z}}}
\newcommand{\Prod}{\ensuremath{\operatorname{Prod}}}
\newcommand{\Term}{\ensuremath{\operatorname{Term}}}
\theoremstyle{plain}
\newtheorem{theorem}{Theorem}
\newtheorem{lemma}[theorem]{Lemma}
\newtheorem{proposition}[theorem]{Proposition}
\newtheorem{corollary}[theorem]{Corollary}
\theoremstyle{definition}
\newtheorem{definition}[theorem]{Definition}
\begin{document}

\begin{titlepage}
  \setcounter{page}{0}
  \begin{center}
    \vspace*{5ex}
    {\LARGE \textbf{Search Methods for Tile Sets in Patterned\\ DNA Self-Assembly}\par}
    \bigskip
    \bigskip
    Mika G\"{o}\"{o}s$^1$,
    Tuomo Lempi\"{a}inen$^2$,
    Eugen Czeizler, and
    Pekka Orponen\\
    \bigskip
    \small{Department of Information and Computer Science and\\
    Helsinki Institute for Information Technology HIIT\\
    Aalto University, Finland}
  \end{center}
  \bigskip
  \bigskip
  \noindent\textbf{Abstract.}
  The Pattern self-Assembly Tile set Synthesis (PATS) problem, which
arises in the theory of structured DNA self-assembly, is to
determine a set of coloured tiles that, starting from a bordering seed
structure, self-assembles to a given rectangular colour pattern. The
task of finding minimum-size tile sets is known to be \textsf{NP}-hard. We
explore several complete and incomplete search techniques for finding
minimal, or at least small, tile sets and also assess the reliability
of the solutions obtained according to the kinetic Tile Assembly
Model.

  \bigskip\\
  \noindent\textbf{Keywords.}
  DNA self-assembly, tilings,
  Tile Assembly Model, pattern assembly,
  tile set synthesis, reliable self-assembly
  \bigskip
  \bigskip
  \bigskip\\
  \small{This is the author's version of a work that was accepted for publication in \emph{Journal of Computer and System Sciences}. Changes resulting from the publishing process, such as peer review, editing, corrections, structural formatting, and other quality control mechanisms may not be reflected in this document. Changes may have been made to this work since it was submitted for publication. A definitive version was subsequently published in Journal of Computer and System Sciences, volume 80, issue 1, pages 297--319, February 2014, \href{http://dx.doi.org/10.1016/j.jcss.2013.08.003}{doi:10.1016/j.jcss.2013.08.003}.}
  \bigskip\\
  \small{Preliminary versions of parts of this work have appeared in the \emph{Proceedings of the 16th} and \emph{17th International Conference on DNA Computing and Molecular Programming} (Hong Kong, China, June 2010 and Pasadena, CA, USA, September 2011, respectively)~\cite{GoOr11,LeCO11}.}
  \bigskip\\
  Current affiliations:\\[1ex]
  \small{$^1$Department of Computer Science, University of Toronto}\\[1ex]
  \small{$^2$Helsinki Institute for Information Technology HIIT and Department of Computer Science, University of Helsinki}
\end{titlepage}

\section{Introduction}
\label{sec:intro}
Algorithmic assembly of nucleic acids (DNA and RNA) has advanced
extensively in the past 30 years, from a seminal idea to the current
designs and experimental implementations of complex nanostructures and
nanodevices with dynamic, programmable evolution and machinelike
properties. Recent developments in the field include fundamental
constructions such as \textit{in vitro} complex 3D pattern formation and
functionalisation \cite{DDLH09,KuLT09}, robotic designs such as mobile
arms, walkers, motors \cite{LMDM10,ZOKV11}, computational
primitives~\cite{QiWi11,QiWB11}, and also applications to \textit{in
vivo} biosensors \cite{LiCL09} and potential drug delivery mechanisms
and therapeutics \cite{LPZL11}.

Self-assembly of nucleic acids can be seen both as a form of
structural nanotechnology and as a model of computation. As a
computational model, one first encodes the input of a computational
problem into an algorithmically designed (DNA) pattern or shape. Then,
by making use of both the initial oligomer design and the intrinsic
properties of the self-assembly system, one manipulates the structure
to produce a new architecture that encodes the desired output.

As a nanotechnology, the goal of algorithmic (DNA/RNA) self-assembly
is to design oligomer sequences that in solution would autonomously
(or with as little interaction as possible) assemble into complex
polymer structures. These may have both static and dynamic properties,
may bind other molecules such as gold nanoparticles or various
proteins, may act as fully addressable scaffolds, or may be used for
further manipulation.  Such molecular constructions can be composed of
from only a couple of DNA strands to more than 200 and, in some cases,
can change their conformation and achieve distinct functionalities.

In recent years there has been a growing interest in integrating these
two directions, in order to obtain complex supramolecular
constructions with interdependencies between computational functions
and conformational switching. Such approaches are envisioned due to a
key property of nucleic acid scaffolds, viz.\ their modularity:
multiple functional units can be attached to a common scaffold, thus
giving rise to multifunctional devices. Thus, the self-assembly of
nanostructures templated on synthetic DNA has been proposed by several
authors as a potentially ground-breaking technology for the
manufacture of next-generation circuits, devices and
materials~\cite{KSML10,MHBB10,WLWS98,YPFR03}.  Also laboratory
techniques for synthesising the requisite 2D DNA template lattices,
many based on Rothemund's~\cite{Roth06} DNA origami tiles, have
recently been demonstrated by many groups~\cite{LZWS11,REKH11}.

In order to support the manufacture of aperiodic structures, such as
electronic circuit designs, these DNA templates need to be
addressable. When the template is constructed as a tiling from a
family of DNA origami (or other kinds of) tiles, one can view the base
tiles as being ``coloured'' according to their different
functionalities, and the completed template implementing a desired
colour pattern.\footnote{For examples of such tile-based high-level
designs for nano-electric circuits cf.\ Appendix~\ref{sec:cnfet}, which summarises
a scheme from Czeizler et al.~\cite{CzLO11}.}  Now, a given target
pattern can be assembled from many different families of base tiles,
and to improve the laboratory synthesis it is advantageous to try to
minimise the number of tile types needed and/or maximise the
probability that they self-assemble to the desired pattern, given some
characteristics of tiling errors.

The task of minimising the number of DNA tile types required to
implement a given 2D pattern was identified by Ma and
Lombardi~\cite{MaLo08}, who formulated it as a combinatorial
optimisation problem, the \emph{Pattern self-Assembly Tile set
Synthesis} (PATS) problem, and also proposed two greedy heuristic
algorithms for solving the task. The problem was recently proved to be
\textsf{NP}-hard~\cite{CzPo12,Seki13}, and hence finding an absolutely
minimum-size tile set for a given pattern most likely requires an
exponential amount of time in the worst case. Thus the problem needs
to be addressed either with complete methods yielding optimal tile
sets for small patterns, or incomplete methods that work also for
larger patterns but do not guarantee that the tile sets produced are
of minimal size. In this work, we present search algorithms covering
both approaches and assess their behaviour experimentally using both
randomly generated and benchmark pattern test sets.  We attend both to
the running time of the respective algorithms, and to the size and
assembly reliability of the tile sets produced.

In the following, we first in Section~\ref{sec:prelim} present an
overview of the underlying tile assembly model~\cite{Winf98b,RoWi00}
and the PATS problem~\cite{MaLo08}, and then in
Section~\ref{sec:search_space} discuss the search space of
pattern-consistent tile sets (viewed abstractly as partitions of the
ambient rectangular grid). In Section~\ref{sec:complete_search} we
proceed to describe our exhaustive partition-search branch-and-bound
algorithm (PS-BB) to find tile sets of absolutely minimum cardinality.
The algorithm makes use of a search tree in the lattice of grid
partitions, and an efficient bounding function to prune this search
tree.

While the PS-BB algorithm can be used to find certifiably minimal tile
sets for small patterns, the size of the search space grows so rapidly
that the algorithm hits a complexity barrier at approximately pattern
sizes of $7\times 7$ tiles, for random test patterns. Thus, in a
second approach, presented in Section~\ref{sec:local_search}, we
tailor the basic partition-search framework of the PS-BB algorithm
towards the goal of finding small, but not necessarily minimal tile
sets.  Instead of a systematic branch-and-bound pruning and traversal
of the complete search space, the modified algorithm PS-H applies
heuristics which attempt to optimise the order of the directions in
which the space is explored.

It is well known in the heuristic optimisation
community~\cite{GoSe01,LuSZ93} that when the runtime distribution of a
randomised search algorithm has a large variance, it is with high
probability more efficient to run several independent short runs
(``restarts'') of the algorithm than a single long run.
Correspondingly, we investigate the efficiency of the PS-H algorithm
for a number of parallel executions ranging from 1 to 32, and note
that indeed this number has a significant effect on the success rate
of the algorithm in finding small tile sets.

As a third alternative, presented in Section~\ref{sec:asp}, we
formulate the PATS problem as an Answer Set Programming (ASP)
task~\cite{Lifs08}, and apply a generic ASP solver to find solutions to
it. Here our experimental results indicate that for patterns with a
small optimal solution, the ASP approach indeed works well in
discovering that solution.

Given the inherently stochastic nature of the DNA self-assembly
process, it is important also to assess the reliability of a given
tile set, i.e.\ the probability of its error-free self-assembly to the
desired target pattern. In Section~\ref{sec:reliability} we introduce
a method for estimating this quantity, based on Winfree's analysis of
the kinetic Tile Assembly Model~\cite{Winf98b}. We present
experimental data on the reliability of tile sets found by the PS-BB
and PS-H algorithms and find that also here the heuristic
optimisations introduced in the PS-H approach result in a notable
improvement over the basic PS-BB method.

\section{Preliminaries}
\label{sec:prelim}
In this section, we first briefly review the abstract Tile Assembly
Model (aTAM) as introduced by Winfree and
Rothemund~\cite{Winf98b,RoWi00} and then summarise the PATS
problem~\cite{MaLo08}.

\subsection{The Abstract Tile Assembly Model}

The aTAM is a custom-made generalisation of Wang tile systems~\cite{Wa61,Gr86}, designed for the study of
self-assembly systems. The basic components of the aTAM are
non-rotatable unit square tiles, uniquely defined by the sets of four
``glues'' assigned to their edges. The glues come from a finite
alphabet, and each pair of two glues is associated a strength value
that determines the stability of a link between two tiles having these
glues on the abutting edges. In most cases, it is assumed that the
strength of two distinct glues is zero, while a pair of matching glues
has strength either 1 or 2.

Let $\mathcal{D} = \{N,E,S,W\}$ be the set of four functions
$\Z^2\to\Z^2$ corresponding to the four cardinal
directions:\footnote{In many cases, we use the elements of
  $\mathcal{D}$ just as direction labels and do not interpret
  them as functions. However, in these cases too, we identify
  $S=N^{-1}$ and $W=E^{-1}$.}
$N(x,y)=(x,y+1)$, $E(x,y)=(x+1,y)$, $S=N^{-1}$ and $W=E^{-1}$.
Let $\Sigma$ be a finite set of \emph{glue types} and
$s\colon\Sigma\times\Sigma \to \mathbb{N}$ a \emph{glue strength}
function such that, unless otherwise specified,
$s(\sigma,\sigma') > 0$ only if $\sigma = \sigma'$.
A \emph{tile type} $t\in\Sigma^4$ is a quadruple
$(\sigma_N(t), \sigma_E(t), \sigma_S(t), \sigma_W(t))$ of
glue types for each side of the unit square.  A \emph{tile system}
$T\subseteq\Sigma^4$ is a finite collection of different tile types.

A \emph{(tile) assembly}~$\mathcal{A}$ is a partial mapping
$\mathcal{A}\colon\Z^2 \rightarrow \Sigma^4$ that assigns
tiles to locations in the two-dimensional grid.
A \emph{tile assembly system} (TAS)
$\mathscr{T}=(T,\mathcal{S},s,\tau)$ consists of a tile system~$T$, a
\emph{seed assembly}~$\mathcal{S}$, a glue strength function~$s$ and a
\emph{temperature} $\tau \in \Z^+$ (we use $\tau = 2$). The seed
structure~$\mathcal{S}$ can be either an individual tile or a
connected, finite assembly.  Given an existing (connected)
assembly~$\mathcal{A}$, such as the seed structure~$\mathcal{S}$, a
tile from $T$ can adjoin the assembly if the total strength of the
binding, given by the sum of all strength function values among the
glues placed on the boundary between the tile and the assembly,
reaches or surpasses the temperature threshold~$\tau$.  Note that
tiles of the seed assembly~$\mathcal{S}$ do not need to be in the tile
system $T$, but that $\mathcal{S}$ can be extended only by tiles from
$T$.

Formally, we say that assembly
$\mathcal{A}$ \emph{produces directly} assembly $\mathcal{A}'$,
denoted $\mathcal{A} \rightarrow_{\mathscr{T}} \mathcal{A}'$,
if there exists a site
$(x,y)\in\Z^2$ and a tile $t\in T$ such that $\mathcal{A}' =
\mathcal{A} \cup \{((x,y),t)\}$, where the union is disjoint, and
\begin{equation*} \sum_D
s(\sigma_D(t),\sigma_{D^{-1}}(\mathcal{A}(D(x,y))))\enspace\geq\enspace\tau,
\end{equation*} where $D$ ranges over those directions in $\mathcal{D}$
for which $\mathcal{A}(D(x,y))$ is defined.

In Figure~\ref{fig:bc} we present a TAS with seven tile types and
temperature $\tau=2$ which, starting from the seed tile, assembles a
continuously growing structure that corresponds to a binary counter
pattern (see Figure~\ref{fig:pat-bc32}). Out of the seven tile types
in Figure~\ref{fig:bc}(a), one can distinguish the tile~$\textbf{s}$
used as a seed, two tile types which assemble the boundary of the
structure, and four rule-tile types (two of which are distinguished by
$\textbf{x}$ and $\textbf{y}$), which fill the area in between the
L-shaped boundary. Considering the partial assembly presented in
Figure~\ref{fig:bc}(b), a tile of type~$\textbf{y}$ can adjoin the
assembly at position~$(4,3)$ since
\[
  s(\sigma_S(\mathbf{y}),\sigma_{S^{-1}}(\mathcal{A}(S(4,3)))) + s(\sigma_W(\mathbf{y}),\sigma_{W^{-1}}(\mathcal{A}(W(4,3)))) = 1+1 \geq \tau,
\]
while a tile of type~$\textbf{x}$ cannot adjoin the assembly at the same position (i.e.~$(4,3)$) since
\[
  s(\sigma_S(\mathbf{x}),\sigma_{S^{-1}}(\mathcal{A}(S(4,3)))) + s(\sigma_W(\mathbf{x}),\sigma_{W^{-1}}(\mathcal{A}(W(4,3)))) = 0+1 < \tau.
\]

\begin{figure}[t]
  \centering
  \includegraphics[width=\textwidth]{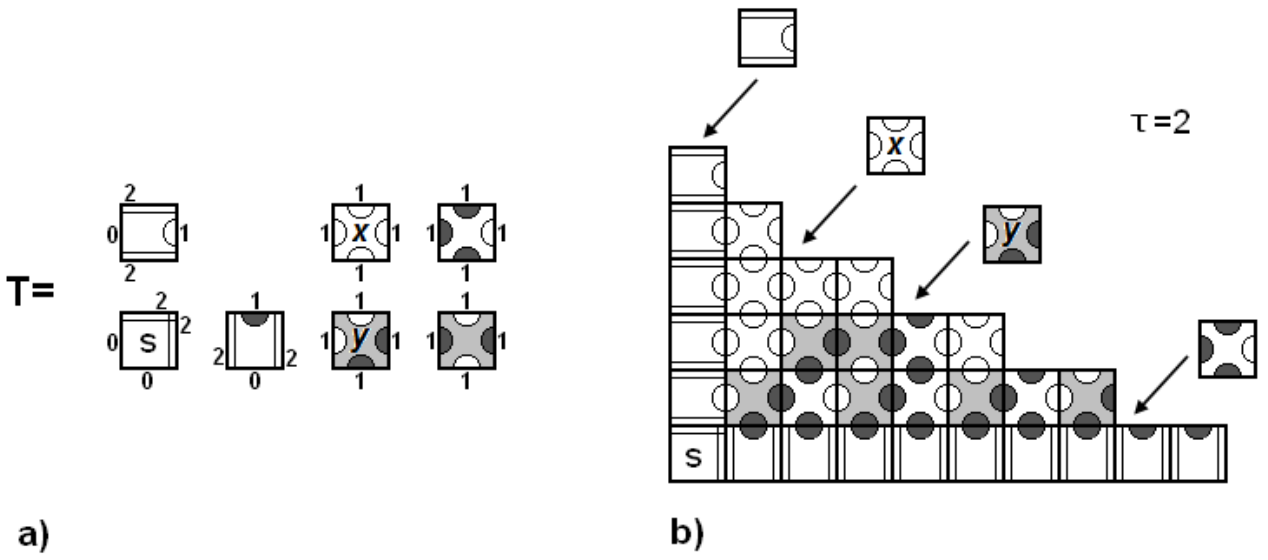}
  \caption{(a) The binary counter tile set~\cite{RoWi00}. The different glues are graphically differentiated, while their associated strengths are marked numerically. The colour of the tiles is an indicator of which tiles represent a black spot and which tiles represent a white spot in the pattern. (b) The assembly of the binary counter pattern for a TAS using the tile set~$T$, a seed structure consisting of the single tile~$\textbf{s}$ and the temperature threshold~$\tau=2$.}
  \label{fig:bc}
\end{figure}

Let $\rightarrow^*_{\mathscr{T}}$ be the reflexive transitive closure
of $\rightarrow_{\mathscr{T}}$.  A TAS $\mathscr{T}$ \emph{produces}
an assembly~$\mathcal{A}$ if
$\mathcal{A}$ is an extension of the seed assembly $\mathcal{S}$, that is,
$\mathcal{S}\rightarrow^*_{\mathscr{T}}\mathcal{A}$. Denote by
$\Prod \mathscr{T}$ the set of all assemblies produced by
$\mathscr{T}$.
A TAS $\mathscr{T}$ is \emph{deterministic} if for any assembly $\mathcal{A}
\in \Prod \mathscr{T}$ and for every $(x,y)\in\Z^2$ there exists at
most one $t\in T$ such that $\mathcal{A}$ can be extended with $t$ at
site $(x,y)$.
Then the pair $(\Prod \mathscr{T}, \rightarrow^*_{\mathscr{T}})$
forms a partially ordered set, which is a lattice if and only if
$\mathscr{T}$ is deterministic. The maximal elements in
$\Prod \mathscr{T}$, i.e.\ the assemblies $\mathcal{A}$ for which
there does not exist any $\mathcal{A}'$ satisfying
$\mathcal{A}\rightarrow_{\mathscr{T}}\mathcal{A}'$, are called
\emph{terminal assemblies}.
Denote by $\Term \mathscr{T}$ the set of terminal assemblies
of $\mathscr{T}$.
In case of finite assemblies, an equivalent definition of determinism is that all \emph{assembly sequences} $ \mathcal{S} \rightarrow_{\mathscr{T}} \mathcal{A}_1
\rightarrow_{\mathscr{T}} \mathcal{A}_2 \rightarrow_{\mathscr{T}} \cdots $ terminate and $\Term \mathscr{T} = \{\mathcal{P}\}$ for
some assembly $\mathcal{P}$. In this case we say that $\mathscr{T}$ \emph{uniquely produces} $\mathcal{P}$.

\subsection{The PATS Problem}

Let the dimensions $m$ and $n$ be fixed. A mapping from
$[m]\times[n] \subseteq \Z^2$ onto $[k]$ defines
a \emph{$k$-colouring} or a \emph{$k$-coloured pattern}.
To build a given pattern, we start with boundary tiles in
place for the west and south borders of the $m$ by $n$ rectangle and
keep extending this assembly by tiles with strength-1 glues.

\begin{definition}[Pattern self-Assembly Tile set Synthesis
  (PATS) \cite{MaLo08}]\mbox{}\\[1ex]
  \begin{tabular}[]{rp{0.85\textwidth}}
    \textbf{Given:} & A $k$-colouring $c\colon [m]\times[n]\to[k]$.\\
    \textbf{Find:} & A tile assembly system $\mathscr{T} = (T,\mathcal{S},s,2)$ such that
    \vspace{1ex}
    \begin{compactenum}[P1. ]
      \item The tiles in $T$ have glue strength 1.
      \item The domain of $\mathcal{S}$ is $[0,m]\times\{0\}\cup\{0\}\times[0,n]$
        and all the terminal assemblies have domain $[0,m]\times[0,n]$.
      \item There exists a tile colouring $d\colon T\to[k]$ such that each
        terminal assembly $\mathcal{A}\in\Term\mathscr{T}$ satisfies
        $d(\mathcal{A}(x,y))=c(x,y)$ for all $(x,y)\in[m]\times[n]$.
    \end{compactenum}
  \end{tabular}
\end{definition}

Finding minimal solutions (in terms of $|T|$) to the PATS problem was
claimed to be \textsf{NP}-hard by Ma and Lombardi~\cite{MaLo08} and
proved to be so by Czeizler and Popa~\cite{CzPo12}.\footnote{An
  improvement of the result to use only a constant number of tile
  colours is due to Seki~\cite{Seki13}.}  Without loss of generality,
we consider only TASs $\mathscr{T}$ in which every tile type
participates in some terminal assembly of $\mathscr{T}$.

As an illustration, using a 4-tile TAS from Winfree~\cite{Winf98b}, we construct a $7
\times 7$ Sierpinski triangle pattern in Figure~\ref{fig:sierpinski}. We use
natural numbers as glue labels in our figures.

\begin{figure}[t] \centering
  \includegraphics[width=\textwidth]{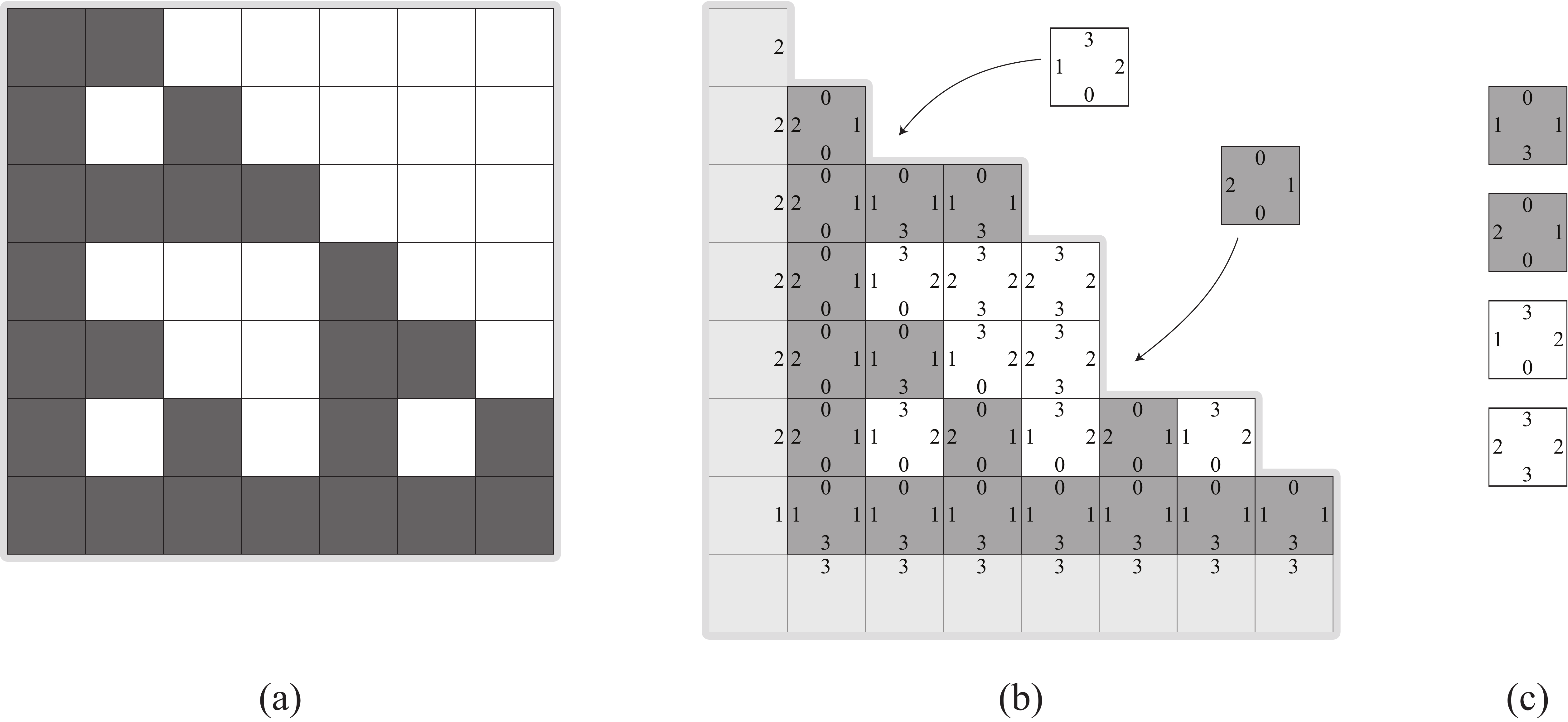}
  \caption{(a) A finite subset of the discrete Sierpinski
triangle pattern. This 2-colouring of the set $[7]\times[7]$ defines an
instance of the PATS problem. (b) Assembling the Sierpinski triangle pattern (see e.g. Winfree~\cite{Winf98b})
with a TAS that has an appropriate seed assembly and a (coloured) tile
set shown in (c).}
  \label{fig:sierpinski}
\end{figure}

In the literature, the seed assembly of a TAS is often taken to be a
single seed tile~\cite{RoWi00} whereas we consider an L-shaped seed
assembly. The boundaries can always be self-assembled using $m+n+1$
different tiles with strength-2 glues, but we wish to make a clear
distinction between the complexity of constructing the boundaries and
the complexity of the 2D pattern itself. Moreover, in some
experimental designs for DNA tile assembly systems, such as that by
Fujibayashi et al.~\cite{FHPW08}, the implementation of seed
structures by the DNA origami technique~\cite{Roth06} allows the
creation of such complete boundary conditions in a natural way.

Due to constraint~P1 the self-assembly process proceeds in a
uniform manner directed from south-west to north-east. This paves the
way for a simple characterisation of deterministic TASs in the context
of the PATS problem.
\begin{proposition} \label{prop:determinism} Solutions $\mathscr{T} =
(T,\mathcal{S},s,2)$ of the PATS problem are deterministic precisely
when for each pair of glue types $(\sigma_1,\sigma_2)\in\Sigma^2$
there is at most one tile type $t\in T$ such that $\sigma_S(t)=\sigma_1$
and $\sigma_W(t)=\sigma_2$.
\end{proposition}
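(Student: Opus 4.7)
The plan is to prove both directions of the equivalence by first establishing a structural lemma about how PATS assemblies grow, and then deducing determinism (or non-determinism) from the tile-type hypothesis.

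The main step I would carry out first is to show that, under constraints P1--P2, every assembly reachable from the seed grows strictly from south-west to north-east. More precisely, I would prove by induction on the length of an assembly sequence the following claim: whenever a tile is placed at a site $(x,y)\in[m]\times[n]$, both $(x-1,y)$ and $(x,y-1)$ are already in the assembly, and the strength-$2$ binding used to place that tile is contributed exactly by its south and west glues. The base case is trivial since the L-shaped seed occupies only $[0,m]\times\{0\}\cup\{0\}\times[0,n]$. For the inductive step, any tile adjoining at $(x,y)$ must be held in place by at least two strength-$1$ glues (recall $\tau=2$), so at least two of the four neighbouring sites of $(x,y)$ must already carry tiles. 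A short case analysis over the $\binom{4}{2}=6$ possible pairs shows that any pair involving $(x+1,y)$ or $(x,y+1)$ contradicts the inductive hypothesis (since, e.g.\ a tile at $(x+1,y)$ would require $(x,y)$ itself to have been placed earlier). The only surviving case is that $(x-1,y)$ and $(x,y-1)$ are present, which is exactly what is claimed.

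Given this structural lemma, the two implications of the proposition are short. For the $(\Leftarrow)$ direction, suppose that for every pair $(\sigma_1,\sigma_2)\in\Sigma^2$ at most one tile type $t\in T$ has $\sigma_S(t)=\sigma_1$ and $\sigma_W(t)=\sigma_2$. For any reachable assembly $\mathcal{A}$ and any site $(x,y)$ at which $\mathcal{A}$ can be extended, the lemma forces the extension to match the north glue of $\mathcal{A}(x,y-1)$ on its south side and the east glue of $\mathcal{A}(x-1,y)$ on its west side; by hypothesis such a tile is unique, so $\mathscr{T}$ is deterministic. For the $(\Rightarrow)$ direction, I argue contrapositively. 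Suppose there are two distinct tile types $t_1\neq t_2$ with $\sigma_S(t_1)=\sigma_S(t_2)$ and $\sigma_W(t_1)=\sigma_W(t_2)$. By our standing assumption that every tile type participates in some terminal assembly, $t_1$ is placed at some position $(x_0,y_0)$ during some assembly sequence; take the partial assembly $\mathcal{A}$ reached just before this placement. Then $\mathcal{A}(x_0-1,y_0)$ and $\mathcal{A}(x_0,y_0-1)$ are defined and their relevant glues match the common $(\sigma_S,\sigma_W)$ pair, so $\mathcal{A}$ can be extended at $(x_0,y_0)$ both by $t_1$ and by $t_2$, violating determinism.

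The main obstacle I expect is the case analysis in the structural lemma: one must be careful that the $\tau=2$ argument really rules out tile attachments that rely on, say, already-placed northern or eastern neighbours, and this relies crucially on the inductive hypothesis together with P1 (strength-$1$ glues only, so two matching edges are both necessary and sufficient). Once that lemma is in hand, both directions of the equivalence drop out almost immediately from the uniqueness/non-uniqueness of the $(\sigma_S,\sigma_W)$-indexed tile type.
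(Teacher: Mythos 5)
Your proof is correct and follows exactly the route the paper intends: the paper states this proposition without proof, justifying it only by the preceding remark that constraint P1 forces assembly to proceed uniformly from south-west to north-east, which is precisely the structural lemma you prove by induction before deriving both directions from it. Your write-up simply makes explicit (including the use of the standing assumption that every tile type occurs in some terminal assembly, needed for the forward direction) what the paper leaves implicit.
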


A simple observation reduces the work needed in finding minimal
solutions of the PATS problem.
\begin{lemma} The minimal solutions of the PATS problem are
deterministic TASs.
\end{lemma}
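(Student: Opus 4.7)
The plan is to argue by contrapositive: any non-deterministic PATS solution can be shrunk to one with strictly fewer tile types, and so is not minimal. Given a non-deterministic solution $\mathscr{T}=(T,\mathcal{S},s,2)$, Proposition~\ref{prop:determinism} provides two distinct tile types $t_1,t_2\in T$ with $\sigma_S(t_1)=\sigma_S(t_2)$ and $\sigma_W(t_1)=\sigma_W(t_2)$. I would form the candidate tile system $\mathscr{T}''=(T\setminus\{t_2\},\mathcal{S},s,2)$ and show that it is again a PATS solution, which, since $|T\setminus\{t_2\}|=|T|-1$, contradicts the minimality of $\mathscr{T}$.

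Condition P1 is inherited trivially, so the central task is to verify that every terminal assembly $\mathcal{A}''$ of $\mathscr{T}''$ is also terminal in $\mathscr{T}$; P2 and P3 for $\mathscr{T}''$ (under the restricted colouring $d|_{T\setminus\{t_2\}}$) then follow immediately from the corresponding properties of $\mathscr{T}$. I would establish this by contradiction: suppose some tile $t\in T$ extends $\mathcal{A}''$ at a position $(x,y)$. Any $t\in T\setminus\{t_2\}$ would already violate the terminality of $\mathcal{A}''$ in $\mathscr{T}''$, so necessarily $t=t_2$. Because PATS growth proceeds from the L-shaped south-western seed toward the north-east --- the very structural feature that underlies Proposition~\ref{prop:determinism} --- the strength-$2$ bond by which $t_2$ adjoins $\mathcal{A}''$ at $(x,y)$ is realised through matches on its south and west edges. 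Since $t_1$ shares these edges with $t_2$, the tile $t_1\in T\setminus\{t_2\}$ can equally well be placed at $(x,y)$, contradicting the terminality of $\mathcal{A}''$ in $\mathscr{T}''$.

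The main obstacle is not the pigeonhole-style contradiction itself but the appeal to directional growth: one has to know that the adjoining bond of $t_2$ at $(x,y)$ is realised specifically through its south and west edges rather than through some other pair of sides, since it is this fact that legitimises substituting $t_1$ for $t_2$. As this is precisely the PATS-specific structural feature that Proposition~\ref{prop:determinism} already exploits, I would treat it as available; granting it, the reduction to a smaller PATS solution follows in a few lines and the lemma is proved.
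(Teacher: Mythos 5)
Your proposal is correct and follows essentially the same route as the paper's proof: remove one of the two tiles sharing south and west glues, show that any terminal assembly of the reduced system is already terminal in the original by substituting $t_1$ for $t_2$ at any site where $t_2$ could attach, and conclude that minimality is violated. The directional-growth fact you flag (attachment is always via the south and west edges) is exactly the observation the paper invokes at the same point, so there is no gap.
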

\begin{proof} For the sake of contradiction, suppose that $\mathscr{N}
= (T,\mathcal{S},s,2)$ is a minimal solution to a PATS problem
instance and that $\mathscr{N}$ is not deterministic. By the above
proposition, let tiles $t_1,t_2\in T$ be such that
$\sigma_S(t_1)=\sigma_S(t_2)$ and
$\sigma_W(t_1)=\sigma_W(t_2)$. Consider the simplified TAS
$\mathscr{N}' = (T\smallsetminus\{t_2\},\mathcal{S},s,2)$. We show
that this, too, is a solution to the PATS problem, which violates the
minimality of $|T|$.

Suppose $\mathcal{A}\in\Term\mathscr{N}'$. If
$\mathcal{A}\notin\Term\mathscr{N}$, then some $t\in T$ can be used to
extend $\mathcal{A}$ in $\mathscr{N}$. If $t\in
T\smallsetminus\{t_2\}$, then $t$ could be used to extend
$\mathcal{A}$ in $\mathscr{N}'$, so we must have $t=t_2$. But since
new tiles are always attached by binding to south and west sides of
the tile, $\mathcal{A}$ could then be extended by $t_1$ in
$\mathscr{N}'$. Thus, we conclude that
$\mathcal{A}\in\Term\mathscr{N}$ and furthermore
$\Term\mathscr{N}'\subseteq\Term\mathscr{N}$. This demonstrates that
$\mathscr{N}'$ has property P2. The properties P1
and P3 can be readily seen to hold for $\mathscr{N}'$ as
well. In terms of $|T|$ we have found a more optimal solution---and a
contradiction.
\end{proof}
We consider only deterministic TASs in the sequel.

\section{The Search Space of Consistent Tile Sets}
\label{sec:search_space}
Let $X$ be the family of partitions of the set $[m]\times[n]$.  Partition
$P$ is \emph{coarser} than partition~$P'$ (or $P'$ is a
\emph{refinement} of $P$), denoted $P \sqsubseteq P'$, if
\begin{equation*}
  \forall p'\in P'\colon\enspace \exists p\in P\colon\enspace p'\subseteq p.
\end{equation*}
Now, $(X,\sqsubseteq)$ is a partially ordered set, and in fact, a
lattice. Note that $P\sqsubseteq P'$ implies $|P|\leq |P'|$.

A colouring $c \colon [m]\times[n] \to [k]$ induces a partition
$P(c)=\{c^{-1}(i)\enspace|\enspace i\in [k]\}$ of the set
$[m]\times[n]$.  In addition, since every (deterministic) solution
$\mathscr{T} = (T,\mathcal{S},s,2)$ of the PATS problem uniquely
produces some assembly $\mathcal{A}$, we associate with $\mathscr{T}$
a partition $P(\mathscr{T})$ of $[m]\times[n]$, $P(\mathscr{T}) =
\{\mathcal{A}^{-1}(t)\enspace|\enspace
t\in\mathcal{A}([m]\times[n])\}$. Here, $|P(\mathscr{T})|=|T|$ in case
all tiles in $T$ are used in the terminal assembly. Now condition
P3 in the definition of PATS
is equivalent to requiring that a TAS $\mathscr{T}$ satisfies
\begin{equation*}
  P(c) \sqsubseteq P(\mathscr{T}).
\end{equation*}

A partition $P\in X$ is \emph{constructible} if $P=P(\mathscr{T})$ for
some deterministic TAS $\mathscr{T}$ satisfying properties P1
and P2. Hence the PATS problem can be rephrased using the
family of partitions as the fundamental search space.
\begin{proposition} A minimal solution to the PATS problem corresponds
to a partition $P\in X$ such that $P$ is constructible, $P(c)
\sqsubseteq P$ and $|P|$ is minimal.
\end{proposition}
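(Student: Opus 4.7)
The plan is to unwind the correspondence $\mathscr{T}\mapsto P(\mathscr{T})$ already set up in this section and check that ``minimal size of $T$'' and ``minimal size of $P$'' transport across it. Throughout I use the standing assumption stated after Proposition~\ref{prop:determinism} that every tile type in a considered TAS appears in some terminal assembly, together with the preceding lemma which lets us restrict attention to deterministic TASs.

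First I would prove the forward direction. Let $\mathscr{T}=(T,\mathcal{S},s,2)$ be a minimal solution of the PATS instance $c$, and set $P:=P(\mathscr{T})$. Determinism plus the standing assumption give $|P|=|T|$. By construction $P$ is constructible (witnessed by $\mathscr{T}$ itself, which satisfies P1 and P2). Property P3 provides a colouring $d\colon T\to[k]$ with $d(\mathcal{A}(x,y))=c(x,y)$; since the blocks of $P$ are the level sets $\mathcal{A}^{-1}(t)$, each block is contained in a single level set of $c$, which is exactly $P(c)\sqsubseteq P$.

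For the converse, suppose $P\in X$ is constructible, satisfies $P(c)\sqsubseteq P$, and has minimal cardinality among such partitions. Pick a witnessing deterministic TAS $\mathscr{T}=(T,\mathcal{S},s,2)$ with $P(\mathscr{T})=P$, so P1 and P2 hold by constructibility and $|T|=|P|$. The refinement $P(c)\sqsubseteq P$ means that every block of $P$ lies inside some colour class of $c$; defining $d(t):=c(x,y)$ for any $(x,y)$ with $\mathcal{A}(x,y)=t$ is therefore well defined and yields P3. Hence $\mathscr{T}$ is a genuine PATS solution.

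It remains to check that the two minimality notions agree. If $\mathscr{T}$ as above were not of minimal tile-set size, there would exist a (deterministic, by the preceding lemma) PATS solution $\mathscr{T}'$ with $|T'|<|T|$; then $P':=P(\mathscr{T}')$ would be constructible with $P(c)\sqsubseteq P'$ and $|P'|=|T'|<|T|=|P|$, contradicting the minimality of $|P|$. Symmetrically, if $P$ is not the minimal constructible refinement of $P(c)$, any smaller such $P''$ witnessed by some $\mathscr{T}''$ would give a PATS solution with fewer tiles than the supposedly minimal $\mathscr{T}$. The proof is essentially bookkeeping; the only mild subtlety, which I would flag, is the equality $|P(\mathscr{T})|=|T|$, which requires the standing convention that unused tile types have been discarded.
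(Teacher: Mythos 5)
Your proposal is correct and follows essentially the same route as the paper, which offers no explicit proof and treats the proposition as an immediate consequence of the preceding discussion: the correspondence $\mathscr{T}\mapsto P(\mathscr{T})$, the observation that condition P3 is equivalent to $P(c)\sqsubseteq P(\mathscr{T})$, the identity $|P(\mathscr{T})|=|T|$ under the standing convention that every tile type is used, and the lemma reducing to deterministic TASs. You have simply written out this bookkeeping in both directions, correctly flagging the one genuine subtlety (discarding unused tile types).
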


\begin{figure}[t]
\centering
  \includegraphics[width=9cm]{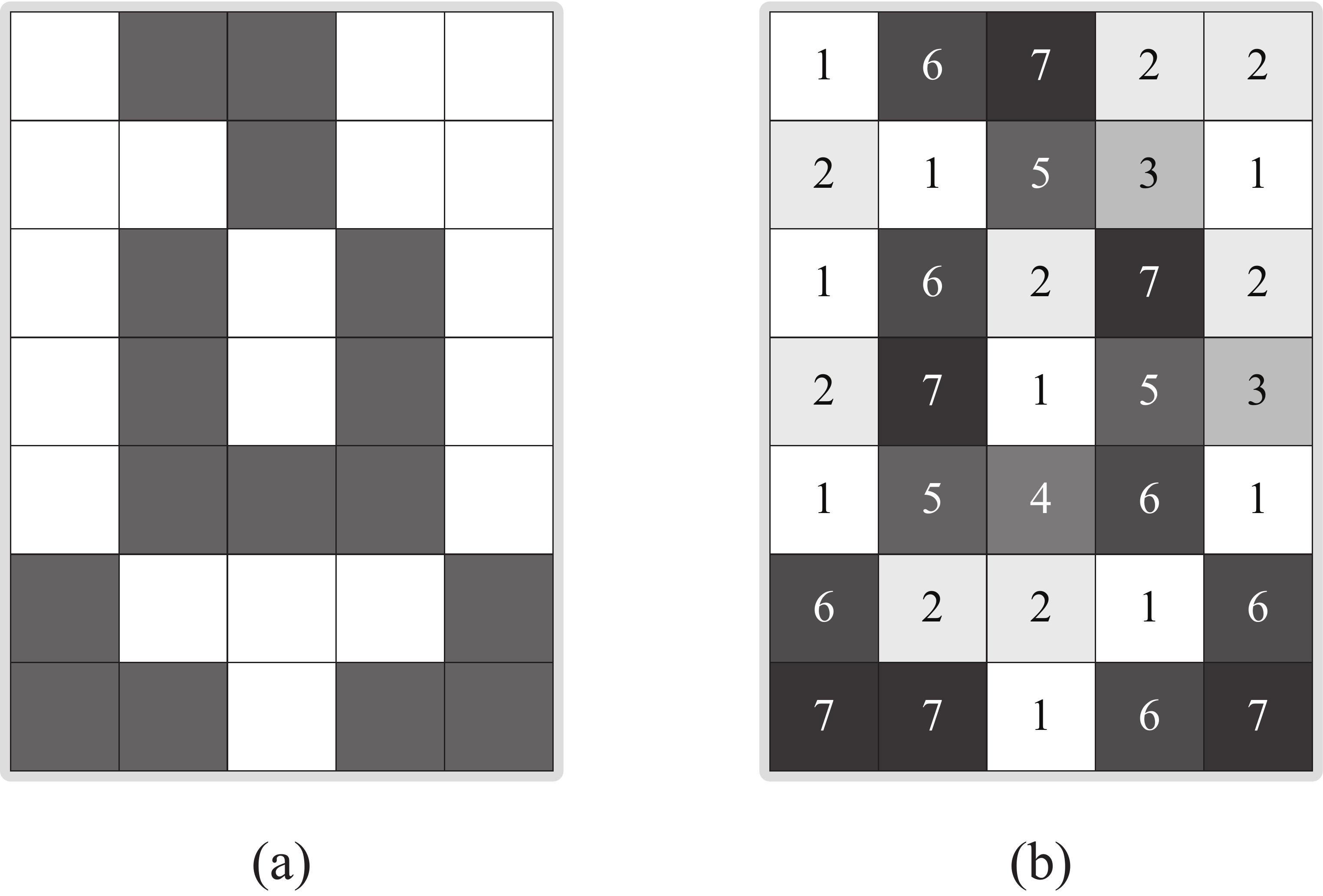}
  \caption{(a) Partition $A$. (b) A partition $M$ that is a refinement
of $A$ with $|M|=7$ partition classes.}
  \label{fig:pattern}

\end{figure}

For example, the 2-coloured pattern in Figure~\ref{fig:pattern}(a)
defines a 2-class partition~$A$. The 7-class partition~$M$ in
Figure~\ref{fig:pattern}(b) is a refinement of $A$ ($A\sqsubseteq M$)
and in fact, $M$ is constructible (see Figure~\ref{fig:construction}(b))
and corresponds to a minimal solution of the PATS problem instance defined by
the pattern~$A$.

\subsection{Determining Constructibility}\label{sec:construct}

In this section, we give an algorithm for deciding the constructibility
of a given partition in polynomial time. To do this, we use the
concept of \emph{most general} (or least constraining) \emph{tile
  assignments}. For simplicity, we assume the set of glue labels
$\Sigma$ to be infinite.

\begin{definition}
  Given a partition $P$ of the set $[m]\times[n]$, a \emph{most
    general tile assignment} (MGTA) is a function $f\colon P\to\Sigma^4$
  such that
\begin{enumerate}[{A}1. ]
\item When every position in $[m]\times[n]$ is assigned a tile type
  according to $f$, any two adjacent positions agree on the glue type
  of the side between them.
\item For all assignments $g\colon P\to\Sigma^4$ satisfying A1 we
  have\footnote{To shorten the notation, we write $f(p)_D$ instead of
    $\sigma_D(f(p))$.}
	\begin{equation*}
          f(p_1)_{D_1}=f(p_2)_{D_2} \quad \implies \quad g(p_1)_{D_1}=g(p_2)_{D_2}
	\end{equation*}
	for all $(p_1,D_1),(p_2,D_2)\in P\times\mathcal{D}$.
\end{enumerate}
\end{definition}

To demonstrate this concept, we present a most general tile assignment
$f\colon I\to\Sigma^4$ for the \emph{initial partition}
$I=\{\{a\}\enspace|\enspace a\in[m]\times[n]\}$ in Figure
\ref{fig:construction}(a) and an MGTA for the partition of Figure
\ref{fig:pattern}(b) in Figure \ref{fig:construction}(b).

\begin{figure}[t]
  \centering
  \includegraphics[width=14cm]{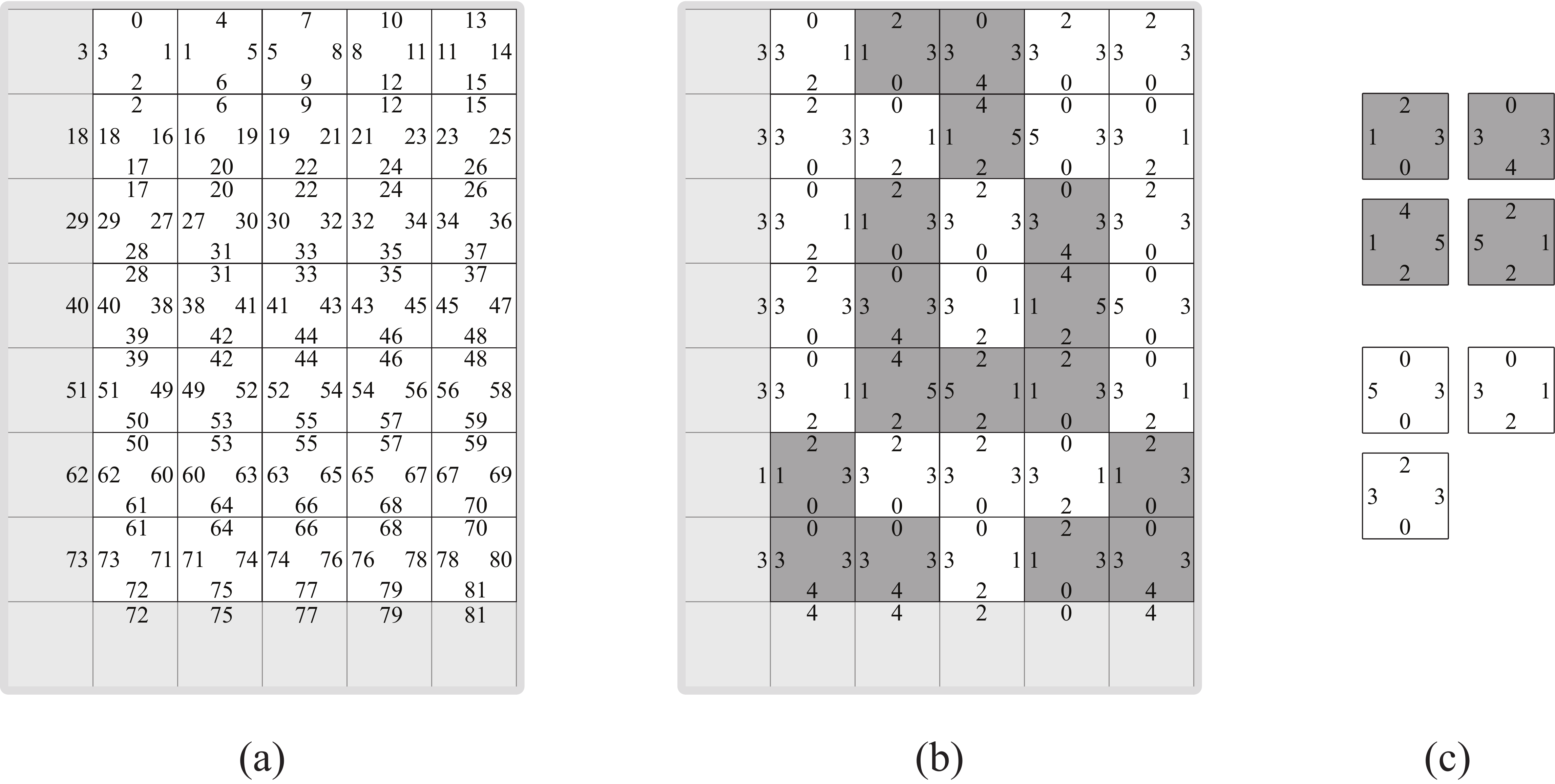}
  \caption{(a) A MGTA for the constructible initial partition $I$
    (with a seed assembly in place). (b) Finished assembly for the
    pattern from Figure \ref{fig:pattern}(a). The tile set to construct
    this assembly is given in (c).}
  \label{fig:construction}
\end{figure}

Given a partition $P\in X$ and a function $f\colon P\to\Sigma^4$, we say
that $g\colon P\to\Sigma^4$ is obtained from $f$ by \emph{merging glues $a$
  and $b$} if for all $(p,D)\in P\times\mathcal{D}$ we have
\begin{equation*}
  g(p)_D=\left\{
  \begin{aligned}
    a,\qquad&\text{if }f(p)_D=b\\
    f(p)_D,\qquad&\text{otherwise}
  \end{aligned}\right..
\end{equation*}

A most general tile assignment for a partition $P\in X$ can be found
as follows. We start with a function $f_0\colon P\to\Sigma^4$ that assigns
to each tile edge a unique glue type, or in other words, a function
$f_0$ such that the mapping $(p,D)\mapsto f_0(p)_D$ is injective. Next,
we go through all pairs of adjacent positions in $[m]\times[n]$ in
some order and require their matching sides to have the same glue
type by merging the corresponding glues. This process generates a
sequence of functions $f_0,f_1,f_2,\ldots,f_N=f$ and terminates after
$N\leq 2mn$ steps.

\begin{lemma} \label{le:mgta}
The above algorithm generates a most general tile assignment.
\end{lemma}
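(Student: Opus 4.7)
The plan is to reformulate the MGTA conditions in terms of equivalence relations on the set $P\times\mathcal{D}$ of tile-edge slots. Every assignment $g\colon P\to\Sigma^4$ induces an equivalence $\sim_g$ on $P\times\mathcal{D}$ by declaring $(p_1,D_1)\sim_g(p_2,D_2)$ iff $g(p_1)_{D_1}=g(p_2)_{D_2}$. Call a pair $((p_1,D),(p_2,D^{-1}))$ an \emph{adjacency pair} if there exists $(x,y)\in[m]\times[n]$ with $(x,y)\in p_1$, $D(x,y)\in p_2$ and $D(x,y)\in[m]\times[n]$. Then A1 says exactly that $\sim_g$ contains all adjacency pairs, and A2 says that $\sim_f$ is contained in every such $\sim_g$, i.e.\ that $\sim_f$ is precisely the equivalence relation generated by the adjacency pairs.

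First I would verify A1 for $f=f_N$. After the merge step triggered by a pair of adjacent positions, the two relevant slots carry the same glue, and subsequent steps only identify glues further, so no later merge can split them apart. Hence at the end every adjacency pair is collapsed, giving A1.

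For A2 I would argue by induction on the step index $i$ that for every $g$ satisfying A1 one has $\sim_{f_i}\,\subseteq\,\sim_g$. The base case is immediate: the map $(p,D)\mapsto f_0(p)_D$ is injective, so $\sim_{f_0}$ is just the diagonal. For the inductive step, $f_{i+1}$ is obtained from $f_i$ by merging two glues $a,b$ that arose at some adjacency pair $\pi$; replacing every occurrence of $b$ by $a$ collapses the $\sim_{f_i}$-classes of $a$ and $b$ into one and leaves all other classes unchanged, so $\sim_{f_{i+1}}$ equals the equivalence relation generated by $\sim_{f_i}\cup\{\pi\}$. By the inductive hypothesis $\sim_{f_i}\,\subseteq\,\sim_g$, and $\pi\in\,\sim_g$ by A1, so $\sim_{f_{i+1}}\,\subseteq\,\sim_g$. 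Applying the inductive claim at $i=N$ gives A2.

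The only delicate point is the structural claim in the inductive step: that merging glues $a$ and $b$ in $f_i$ yields exactly the equivalence generated by adding the single pair $\pi$, not something coarser. This follows because the merge operation acts uniformly on all slots currently labelled $b$, which is precisely the $\sim_{f_i}$-class of any representative of $b$, and symmetrically for $a$; combining two equivalence classes into one is exactly the effect of transitively closing after adding one identification. Everything else is routine bookkeeping, and the $N\le 2mn$ bound is clear because there are at most $(m-1)n+m(n-1)<2mn$ horizontally and vertically adjacent position pairs to process.
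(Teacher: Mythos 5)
Your proof is correct and follows essentially the same route as the paper: both recast tile assignments as equivalence relations (partitions) on $P\times\mathcal{D}$ and show that the algorithm computes exactly the relation generated by the adjacency identifications, which is by construction contained in $\sim_g$ for every $g$ satisfying A1. The only difference is presentational — the paper appeals to the greatest lower bound in the partition lattice, whereas you unfold that into an explicit induction on the merge steps, which also makes rigorous the step the paper merely asserts (that the algorithm's output really is that infimum and nothing coarser).
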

\begin{proof}
  By the end, we are left with a function $f$ that satisfies property
  A1 by construction. To see why property A2 is
  satisfied, we again use the language of partitions.

  Any tile assignment on $P$ gives rise to a set of equivalence
  classes (or a partition) on $P\times\mathcal{D}$: class-direction
  pairs that are assigned the same glue type reside in the same
  equivalence class. The initial assignment $f_0$ gives each
  class-direction pair a unique glue type, and thus corresponds to
  the initial partition $J=\{\{a\}\enspace|\enspace a\in
  P\times\mathcal{D}\}$. In the algorithm, any glue merging operation
  corresponds to the combining of two equivalence classes.

  The algorithm goes through a list of pairs
  $\{\{a_i,b_i\}\}_{i=0}^{N-1}$ of elements from $P\times\mathcal{D}$
  that are required to have the same glue type. In this way, the list
  records necessary conditions for property A1 to hold. This
  is to say that every tile assignment satisfying A1 has to
  correspond to a partition of $P\times\mathcal{D}$ that is coarser
  than each of the partitions in
  $\mathcal{L}=\{J[a_i,b_i]\}_{i=0}^{N-1}$, where $J[a,b]$ is the
  partition obtained from the initial partition by combining classes
  $a$ and $b$. Since the set $(P\times\mathcal{D}, \sqsubseteq)$ is a
  lattice, there exists a unique greatest lower bound $\inf
  \mathcal{L}$ of the partitions in $\mathcal{L}$. This is exactly the
  partition that the algorithm calculates in the form of the
  assignment $f$. As a greatest lower bound, $\inf \mathcal{L}$ is
  finer
  than any partition corresponding to an assignment satisfying
  A1, but this is precisely the requirement for condition A2.
\end{proof}

The above analysis also gives the following.
\begin{corollary}
  For a given partition of $[m]\times[n]$, MGTAs are unique up to
  relabelling of the glue types.
\end{corollary}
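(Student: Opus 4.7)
The plan is to deduce the corollary directly from property A2 applied symmetrically to two candidate MGTAs. Let $f,g\colon P\to\Sigma^4$ both be most general tile assignments for the same partition $P$. Each assignment induces an equivalence relation on $P\times\mathcal{D}$, where $(p_1,D_1)$ and $(p_2,D_2)$ are equivalent under $f$ (respectively $g$) iff $f(p_1)_{D_1}=f(p_2)_{D_2}$ (respectively $g(p_1)_{D_1}=g(p_2)_{D_2}$). The key observation is that these two equivalence relations must coincide.

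To see this, first note that since $g$ satisfies A1, applying property A2 of $f$ with $g$ playing the role of the auxiliary assignment yields the implication
\begin{equation*}
f(p_1)_{D_1}=f(p_2)_{D_2} \implies g(p_1)_{D_1}=g(p_2)_{D_2}.
\end{equation*}
By complete symmetry (swapping the roles of $f$ and $g$ and using A1 for $f$ together with A2 for $g$), the reverse implication also holds. Hence the two equivalence relations induced on $P\times\mathcal{D}$ are identical.

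From this coincidence of equivalence relations, define a map $\phi$ on the image of $f$ by $\phi(f(p)_D):=g(p)_D$. This is well defined because whenever $f(p_1)_{D_1}=f(p_2)_{D_2}$ we also have $g(p_1)_{D_1}=g(p_2)_{D_2}$; it is injective for the reverse reason; and it is surjective onto the image of $g$ since every value $g(p)_D$ in that image is hit by construction. Extend $\phi$ arbitrarily to a bijection on all of $\Sigma$ (which is possible because $\Sigma$ is infinite and the images of $f$ and $g$ are both finite). By construction $g(p)_D=\phi(f(p)_D)$ for every $(p,D)\in P\times\mathcal{D}$, so $g$ is obtained from $f$ by relabelling glue types.

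The only minor subtlety is confirming that extending $\phi$ from the image of $f$ to a bijection of $\Sigma$ causes no conflict; this is immediate because the complements $\Sigma\setminus\mathrm{image}(f)$ and $\Sigma\setminus\mathrm{image}(g)$ are both infinite of the same cardinality. The genuine content of the argument is the symmetric use of A2 to force agreement of the two equivalence relations, and this step follows directly from the definition of an MGTA without further computation.
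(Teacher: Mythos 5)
Your proof is correct and takes essentially the same route as the paper: the paper deduces the corollary from the analysis in the proof of Lemma~\ref{le:mgta}, which characterises the equivalence relation on $P\times\mathcal{D}$ induced by any MGTA as a unique extremal element of the partition lattice, and your symmetric application of A2 (using that each MGTA satisfies A1, so the other's A2 applies to it) establishes exactly that this induced relation is the same for any two MGTAs. Your explicit construction of the relabelling bijection $\phi$, including the extension to all of the infinite set $\Sigma$, is a detail the paper leaves implicit and is handled correctly.
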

Thus, for each partition $P\in X$, we take \emph{the MGTA for $P$} to be some canonical representative from the class of MGTAs for $P$.

For efficiency purposes, it is worth mentioning that MGTAs can be
generated iteratively: A partition $P\in X$ can be obtained by
repeatedly combining partition classes starting from the initial partition $I$:
\begin{equation*}
  I = P_1 \sqsupseteq P_2 \sqsupseteq \cdots \sqsupseteq P_N = P.
\end{equation*}
As a base case, an MGTA for $I$ can be computed by the above
algorithm. An MGTA for each $P_{i+1}$ can be computed from an MGTA for
the previous partition $P_i$ by just a small modification: Let an MGTA
$f_i\colon P_i\to\Sigma^4$ be given for $P_i$ and suppose $P_{i+1}$ is
obtained from $P_i$ by combining classes $p_1,p_2\in P_i$. Now, an MGTA
$f_{i+1}$ for $P_{i+1}$ can be obtained from $f_i$ by \emph{merging
  tiles} $f_i(p_1)$ and $f_i(p_2)$, that is, merging the glue types on
the four corresponding sides.

We now give the conditions for a partition to be constructible in
terms of MGTAs.
\begin{lemma}\label{le:construct}
  A partition $P\in X$ is constructible iff the MGTA $f\colon P\to\Sigma^4$
  for $P$ is injective and the tile set $f(P)$ is deterministic in the
  sense of Proposition~\ref{prop:determinism}.
\end{lemma}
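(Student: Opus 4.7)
The plan is to prove the two directions of the iff separately, in both cases using the universal property A2 of the MGTA as the main technical lever. The backward direction is where the real work lies, since it requires constructing a concrete TAS from the purely combinatorial data of $f$.

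For the forward direction, I would start from a witnessing deterministic TAS $\mathscr{T}=(T,\mathcal{S},s,2)$ with $P(\mathscr{T})=P$, and extract from its unique terminal assembly $\mathcal{A}$ a tile assignment $g\colon P\to T$ that sends each class $p$ to the common tile that $\mathcal{A}$ places on the positions of $p$. By construction $g$ satisfies A1, $g$ is injective (distinct classes of $P(\mathscr{T})$ carry distinct tile types by definition of $P(\mathscr{T})$), and $g(P)=T$ is deterministic in the sense of Proposition~\ref{prop:determinism}. Applying A2 to $g$ then transfers both properties from $g$ to $f$: any coincidence $f(p_1)=f(p_2)$ with $p_1\neq p_2$ would make all four sides agree under $f$ and hence, via A2, also under $g$, contradicting injectivity; and any south--west collision $f(p_1)_S=f(p_2)_S$ with $f(p_1)_W=f(p_2)_W$ for $p_1\neq p_2$ would similarly be lifted to $g$, contradicting determinism of $g(P)$.

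For the backward direction, assuming $f$ is injective and $f(P)$ is deterministic, I would set $T:=f(P)$ with all matching glue pairs given strength~$1$, and build the L-shaped seed $\mathcal{S}$ required by~P2 so that the north glues of its tiles at positions $(i,0)$ match the south glues that $f$ assigns to the classes containing $(i,1)$, and symmetrically on the western boundary. Defining $\mathcal{A}\colon[m]\times[n]\to T$ by $\mathcal{A}(a)=f(p)$ for the unique $p\ni a$, property A1 makes $\mathcal{A}$ glue-consistent, and a straightforward induction on $x+y$ shows $\mathcal{S}\rightarrow^*_{\mathscr{T}}\mathcal{A}$: the next position to be filled always finds its south and west neighbours already in place, and the two matching strength-$1$ glues sum to $\tau=2$. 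Determinism of $f(P)$ plus Proposition~\ref{prop:determinism} forces $\mathcal{A}$ to be the unique terminal assembly, and injectivity of $f$ then yields $P(\mathscr{T})=\{\mathcal{A}^{-1}(f(p)):p\in P\}=P$ directly, with P1 and P2 immediate from the construction.

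The step I expect to be the main obstacle is the seed construction in the backward direction, namely ensuring that the boundary glues exposed by $\mathcal{S}$ neither enable unintended alternative attachments along the border (which would break determinism or contradict $P(\mathscr{T})=P$) nor allow growth beyond $[m]\times[n]$. The clean resolution is to expose on $\mathcal{S}$ exactly the glue labels already prescribed by $f$ on the south edges of row~$1$ and the west edges of column~$1$, so that determinism of $f(P)$ is precisely what rules out ambiguous boundary attachments, while internally binding the seed with strength-$2$ glues and using fresh unique labels on its outermost sides prevents any attachment outside the target rectangle.
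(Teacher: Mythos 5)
Your proposal is correct and follows essentially the same route as the paper: the forward direction extracts a tile assignment $g$ from the witnessing deterministic TAS and uses property A2 to transfer any injectivity or determinism violation from $f$ to $g$, and the backward direction builds the TAS $\mathscr{T}=(f(P),\mathcal{S},s,2)$ with a seed whose exposed glues are those prescribed by $f$ on row~1 and column~1. Your treatment of the seed construction and the induction on $x+y$ is more explicit than the paper's one-line remark, but it is the same argument.
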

\begin{proof}
  ``$\Rightarrow$'': Let $P\in X$ be constructible and let the MGTA
  $f\colon P\to\Sigma^4$ for $P$ be given. Let $\mathscr{T}$ be a
  deterministic TAS such that $P(\mathscr{T})=P$. The uniquely
  produced assembly of $\mathscr{T}$ induces a tile assignment
  $g\colon P\to\Sigma^4$ that satisfies property A1. Now using
  property~A2 for the MGTA $f$ we see that any violation of
  the injectivity of $f$ or any violation of the determinism of the
  tile set $f(P)$ would imply such violations for $g$. But since $g$
  corresponds to a constructible partition, no violations can occur
  for $g$ and thus none for~$f$.

  ``$\Leftarrow$'': Let $f\colon P\to\Sigma^4$ be an injective MGTA with
  deterministic tile set $f(P)$. Because $f(P)$ is deterministic,
  we can choose glue types for a seed assembly $\mathcal{S}$ so that the
  westernmost and southernmost tiles fall into place according to $f$
  in the self-assembly process. The TAS $\mathscr{T}=(f(P),
  \mathcal{S},s,2)$, with appropriate glue strengths $s$, then
  uniquely produces a terminal assembly that agrees with $f$ on
  $[m]\times[n]$. This gives $P(\mathscr{T}) \sqsubseteq P$, but since
  $f$ is injective, $|P|=|f(P)|=|P(\mathscr{T})|$ and so
  $P(\mathscr{T})=P$.
\end{proof}

\begin{figure}[t]
  \centering
  \includegraphics[width=\textwidth]{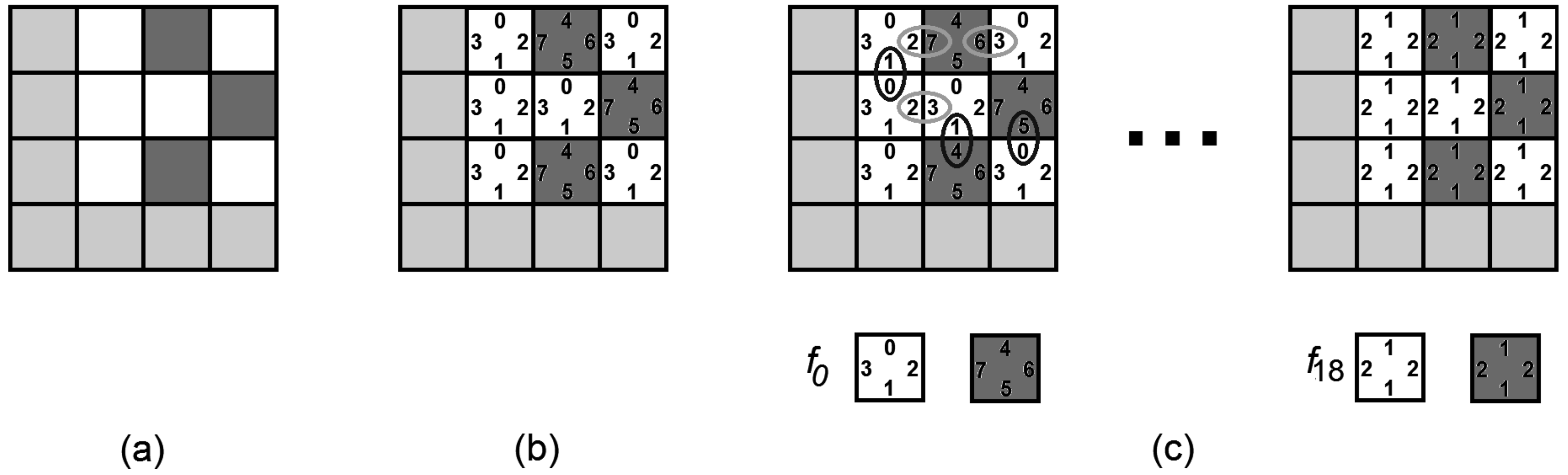}
  \caption{(a) A 2-coloured pattern. (b) The 2-class partition generated by the two colours and the initialisation of the procedure for finding an MGTA for this partition. (c) As a result of the MGTA generation procedure we obtain two tiles which have the same glues on all corresponding edges.}
  \label{fig:part}
\end{figure}

In order to understand the result of Lemma~\ref{le:construct} better,
let us consider the 2-coloured pattern in Figure~\ref{fig:part}(a),
and associate to it the 2-class partition generated by the colours of
the pattern. We can use the result of the previous lemma to show that
this partition in not constructible. Indeed, if we consider the
procedure for generating an MGTA for this partition, e.g.\
Figure~\ref{fig:part}(b) and (c), we obtain that the two tiles of the
MGTA (one coloured white, the other black) must have the same glues on
all corresponding positions. Hence the MGTA is not injective, nor
deterministic in the sense of Proposition~\ref{prop:determinism}.

\section{Complete Search for Minimal Tile Sets}
\label{sec:complete_search}
We now extend the techniques of Ma and Lombardi~\cite{MaLo08} to
obtain an exhaustive branch-and-bound search method to find minimal
solutions to the PATS problem. We call this approach the
\emph{partition-search branch-and-bound} (PS-BB) algorithm. The idea
of Ma and Lombardi \cite{MaLo08} (following experimental work of Park
et al.~\cite{PYRL04}) is to start with an \emph{initial tile set} that
consists of $mn$ different tiles, one for each of the grid positions
in $[m]\times[n]$. Their algorithm then proceeds to merge tile types
in order to minimise $|T|$. We formalise this search process as an
exhaustive search in the set of all partitions of the set
$[m]\times[n]$.  In the following, we let a PATS instance be given by
a fixed $k$-coloured pattern $c\colon [m]\times[n]\to [k]$.

\begin{figure}[t]
  \centering
  \includegraphics[width=\textwidth]{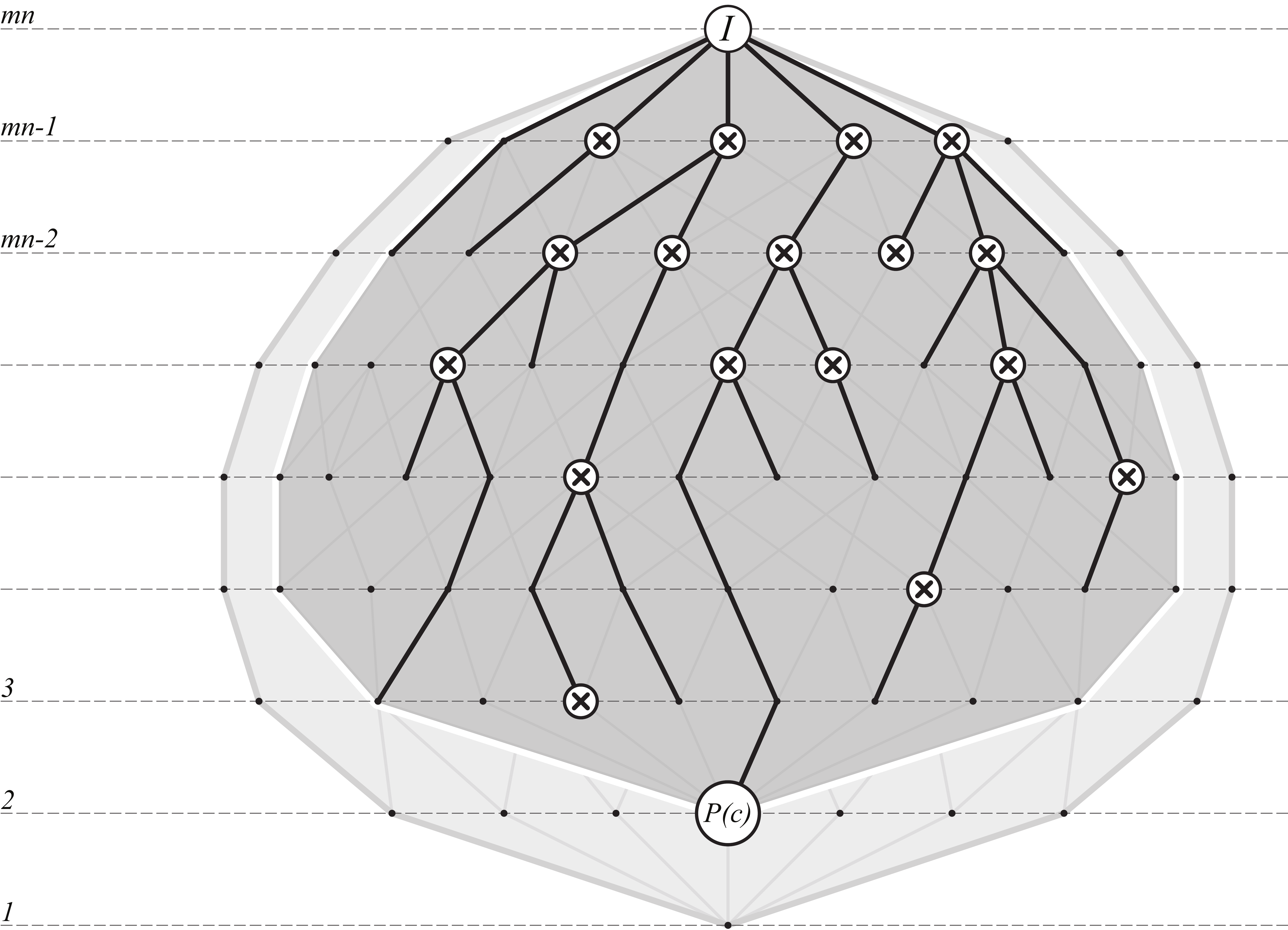}
  \caption{The search tree in the lattice $(X,\sqsubseteq)$. We start
    with the initial partition $I$ of size $|I|=mn$. The partition
    $P(c)$ defines the PATS problem instance: We search for
    constructible partitions (drawn as crosses) in the sublattice
    (shaded with darker grey) consisting of those partitions that are
    refinements of $P(c)$. The search tree branches only at the
    constructible partitions and the tree branches are
    vertex-disjoint.}
  \label{fig:lattice}
\end{figure}

The PS-BB algorithm performs an exhaustive exploration of the lattice
$(X,\sqsubseteq)$, searching for constructible partitions (see
Figure~\ref{fig:lattice}). We start with the initial partition~$I$
that is always constructible. In the search, we maintain and
incrementally update MGTAs for every partition we visit. First, we
describe simple branching rules
to obtain a rooted directed acyclic graph search structure and later give rules
to prune this DAG to a node-disjoint search tree.

The root of the DAG is taken to be the initial partition
$I$. For each partition $P\in X$ we next define the set $C(P)\subseteq
X$ of children of $P$. Our algorithm always proceeds by combining
classes of the partition currently being visited, so for each $P'\in
C(P)$ we will have $P'\sqsubseteq P$.  Say we visit a partition $P\in
X$. We have two possibilities:
\begin{enumerate}[C1. ]
\item \emph{$P$ is constructible:}
  \begin{enumerate}[1. ]
  \item If $P$ is not a refinement of the target pattern $P(c)$, that
    is if $P(c)\not\sqsubseteq P$, we can drop this branch of the
    search, since no possible descendant $P'\sqsubseteq P$ can be a
    refinement of $P(c)$ either.
  \item In case $P(c)\sqsubseteq P$, we can use the MGTA for $P$ to
    give a concrete solution to the PATS problem instance defined by
    the colouring $c$. To continue the search and to find further
    improved solutions we consider each pair of classes
    $\{p_1,p_2\}\subseteq P$ in turn and recursively visit the
    partition $P[p_1,p_2]$ where the two classes are combined. In fact,
    by the above analysis, it is sufficient to consider only pairs of
    the same colour. So, in this case,
    \begin{equation*}
      C(P)=\{P[p_1,p_2]\enspace|\enspace p_1,p_2\in P,\enspace p_1\neq p_2,\enspace \exists k\in P(c)\colon p_1,p_2\subseteq k\}.
    \end{equation*}
  \end{enumerate}
\item \emph{$P$ is not constructible:} In this case the MGTA $f$ for
  $P$ gives $f(p_1)_S=f(p_2)_S$ and $f(p_1)_W=f(p_2)_W$ for some classes
  $p_1\neq p_2$. We continue the search from partition $P[p_1,p_2]$.
\end{enumerate}

To guarantee that our algorithm finds the optimal solution in the case
C2 above, we need the following.
\begin{lemma}
  Let $P\in X$ be a non-constructible partition, $f$ the MGTA for $P$
  and $p_1,p_2\in P$, $p_1\neq p_2$, classes such that
  $f(p_1)_S=f(p_2)_S$ and $f(p_1)_W=f(p_2)_W$. For all constructible
  $C\sqsubseteq P$ we have $C \sqsubseteq P[p_1,p_2]$.
\end{lemma}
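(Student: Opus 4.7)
The plan is to reduce the whole statement to a single equality: showing that $p_1$ and $p_2$ land in the same class of $C$. Let $\pi\colon P\to C$ denote the surjection sending each class of $P$ to the (unique, by $C\sqsubseteq P$) class of $C$ containing it. If I can establish $\pi(p_1)=\pi(p_2)$, then $C\sqsubseteq P[p_1,p_2]$ follows immediately: the classes of $P[p_1,p_2]$ are the classes of $P$ other than $p_1,p_2$ together with $p_1\cup p_2$; each of the former sits inside $\pi(p)\in C$ by $C\sqsubseteq P$, while $p_1\cup p_2\subseteq \pi(p_1)=\pi(p_2)\in C$.

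The core of the proof is to transport the hypothesis from the MGTA $f$ of $P$ to the MGTA of $C$. I would do this using the iterative characterization of MGTAs discussed just after Lemma~\ref{le:mgta}: pick any coarsening chain $P = P_1 \sqsupseteq P_2 \sqsupseteq \cdots \sqsupseteq P_N = C$ in which each step combines a single pair of classes, and build the corresponding chain of MGTAs $f = f_1, f_2, \ldots, f_N = f_C$, each obtained from the previous by merging the glues on the four matching sides of the two tiles whose classes are being combined. Composing these stepwise mergers yields a glue-label map $\phi$ with
\begin{equation*}
  f_C(\pi(p))_D \;=\; \phi\bigl(f(p)_D\bigr) \qquad \text{for every } p\in P,\; D\in\mathcal{D}.
\end{equation*}
Since each step only identifies glues and never separates them, every glue equality present in $f$ persists in $f_C$. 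Applying this to the hypothesis yields $f_C(\pi(p_1))_S = f_C(\pi(p_2))_S$ and $f_C(\pi(p_1))_W = f_C(\pi(p_2))_W$.

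To conclude, I would invoke constructibility of $C$: by Lemma~\ref{le:construct} the MGTA $f_C$ is injective and its tile set $f_C(C)$ is deterministic in the sense of Proposition~\ref{prop:determinism}. Determinism, combined with the matching south and west glues above, forces $f_C(\pi(p_1)) = f_C(\pi(p_2))$ as tile types, and then injectivity of $f_C$ delivers $\pi(p_1) = \pi(p_2)$, closing the reduction.

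The step I expect to be the main technical obstacle is making the glue map $\phi$ rigorous — in particular, checking that the cumulative effect of all the pairwise mergers along the chain is well-defined and independent of the chain chosen. This amounts to observing that merging corresponds to taking joins in the lattice of equivalence relations on glue labels, so only the set of tile pairs identified (which is determined by $\pi$) matters, not the order of operations. Apart from this bookkeeping, the argument is essentially a one-line application of injectivity and determinism of $f_C$.
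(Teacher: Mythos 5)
Your proof is correct, and its endgame coincides with the paper's: both arguments reduce the claim to showing that $p_1$ and $p_2$ lie in a common class of $C$, and both close by observing that the MGTA of the constructible partition $C$ is injective with a deterministic tile set (Lemma~\ref{le:construct}), so two classes of $C$ sharing their south and west glues must coincide. Where you genuinely differ is in how the hypothesis $f(p_1)_S=f(p_2)_S$ and $f(p_1)_W=f(p_2)_W$ is transported to the MGTA of $C$. The paper goes in the opposite direction: it pulls the MGTA $g$ of $C$ back to an assignment $g'$ on $P$ by setting $g'(p)=g(q)$ for $p\subseteq q$, notes that $g'$ satisfies A1, and then invokes the universal property A2 of $f$ to conclude $g'(p_1)_S=g'(p_2)_S$ and $g'(p_1)_W=g'(p_2)_W$ in one step. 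You instead push $f$ forward along a chain of pairwise class combinations via the iterative MGTA construction, arguing that glue equalities survive because each step only merges glue labels. Both work, but the paper's version is more economical: it uses only the defining property A2 (established in Lemma~\ref{le:mgta}) and sidesteps exactly the bookkeeping you flag as the main obstacle --- it never needs to know that the chain-built assignment is again an MGTA for $C$, nor that the composite glue map $\phi$ is independent of the chosen chain. Your route does lean on the paper's claim (stated in Section~\ref{sec:construct} for efficiency reasons, but not proved there) that merging tiles step by step yields an MGTA for the coarser partition; if you want the argument self-contained, the pull-back trick is the cheaper way to obtain the same glue equalities in $C$'s MGTA.
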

\begin{proof}
  Let $P$, $f$, $p_1$ and $p_2$ be as in the statement of the
  lemma. Let $C\sqsubseteq P$ be a constructible partition and
  $g\colon C\to\Sigma^4$ the MGTA for $C$. Since $C$ is coarser than
  $P$ we can obtain from $g$ a tile assignment $g'\colon P\to\Sigma^4$
  such that $g'(p)=g(q)$, where for every $p \in P$, $q\in C$ is the
  unique class for which $p\subseteq q$. The assignment $g'$ has
  property A1 and so using A2 for the MGTA $f$ we
  get that
  \[
    f(p_1)_S=f(p_2)_S \enspace \& \enspace f(p_1)_W=f(p_2)_W
    \enspace\implies\enspace
    g'(p_1)_S=g'(p_2)_S \enspace \& \enspace g'(p_1)_W=g'(p_2)_W.
  \]
  Now, since $C$ is constructible, the identities $g(q_1)_S=g(q_2)_S$
  and $g(q_1)_W=g(q_2)_W$ can not hold for any two different classes
  $q_1,q_2\in C$. Looking at the definition of $g'$, we conclude that
  $p_1\subseteq q$ and $p_2 \subseteq q$ for some $q \in C$. This
  demonstrates $C \sqsubseteq P[p_1,p_2]$.
\end{proof}

\subsection{Pruning the DAG to a Search Tree}

Computational resources should be saved by not visiting any partition
twice. To keep the branches in our search structure node-disjoint, we
maintain a list of graphs that store restrictions on the choices the
search can make.

For each partition $P \sqsupseteq P(c)$ we associate
a family of undirected graphs $\{G^P_k\}_{k\in P(c)}$, one for each
colour class of the pattern $P(c)$. Every class in $P$ is represented
by a vertex in the graph corresponding to the colour of the class. More
formally, the vertex set $V(G^P_k)$ of the graph $G^P_k$ is taken to
be those classes $p\in P$ for which $p\subseteq k$. (So now,
$\bigcup_{k\in P(c)} V(G^P_k) = P$.)
An edge $\{p_1,p_2\}\in E(G^P_k)$ indicates that the classes $p_1$ and~$p_2$ are not
allowed ever to be combined in the search branch in question. When we
start our search with the initial partition $I$, the edge sets are
initially empty, $E(G^I_k)=\varnothing$. At each partition $P$, the
graphs $\{G^P_k\}_{k\in P(c)}$ have been determined inductively and
the graphs for those children $P'\in C(P)$ that we visit are defined
as follows.

\begin{enumerate}[D1. ]
\item \emph{If $P$ is constructible:} We choose some ordering
$\{p_i,q_i\}$, $i=1,\ldots,N$ of similarly coloured pairs of
classes. Define $l_i\in P(c)$, $1\leq i \leq N$ to be the colour of the
pair $\{p_i,q_i\}$, so that $p_i,q_i\subseteq l_i$. Now, we visit a
partition $P[p_i,q_i]$ if and only if $\{p_i,q_i\}\notin E(G^P_{l_i})$. If we
decide to visit a child partition $P'=P[p_j,q_j]$, we define the edge
sets $\{E(G^{P'}_k)\}_{k\in P(c)}$ as follows:
  \begin{enumerate}[1. ]
  \item We start with the graphs $\{G^P_k\}_{k\in P(c)}$ and add the
    edges $\{p_i,q_i\}$ for all $1\leq i < j$ to their corresponding
    graphs. Call the resulting graphs $\{G^\star_k\}_{k\in P(c)}$.
  \item Finally, as we combine the classes $p_j$ and $q_j$ to obtain the
    partition $P[p_j,q_j]$, we merge the vertices $p_j$ and $q_j$ in
    the graph $G^\star_{l_j}$ (after merging, the neighbourhood of the
    new vertex $p_j \cup q_j$ is the union of the neighbourhoods for
    $p_j$ and $q_j$ in $G^\star_{l_j}$). The graphs
    $\{G^{P'}_k\}_{k\in P(c)}$ follow as a result.
  \end{enumerate}
\item \emph{If $P$ is not constructible:} Here, the MGTA for $P$
  suggests a single child partition $P'=P[p_1,p_2]$ for some
  $p_1,p_2\subseteq l\in P(c)$. If $\{p_1,p_2\}\in E(G^P_l)$, we
  terminate this branch of the search. Otherwise, we define the graphs
  $\{G^{P'}_k\}_{k\in P(c)}$ to be the graphs $\{G^P_k\}_{k\in P(c)}$,
  except that in $G^{P'}_l$ the vertices $p_1$ and $p_2$ are merged.
\end{enumerate}

One can see that the outcome of this pruning process is a search tree
that has node-disjoint branches and one in which every possible
constructible partition is still guaranteed to be found.
Figure~\ref{fig:lattice} presents a sketch of the search tree.

Note that we
are not usually interested in finding every constructible partition
$P\in X$, but only in finding a minimal one (in terms of $|P|$). Next,
we give an efficient method to lower-bound the partition sizes of a
given search branch.

\subsection{The Bounding Function}

Given a root $P\in X$ of some subtree of the search tree, we ask: What
is the smallest partition that can be found from this subtree?  The
nodes in the subtree rooted at $P$ comprise those partitions
$P'\sqsubseteq P$ that can be obtained from $P$ by merging pairs of
classes that are not forbidden by the graphs $\{G^P_k\}_{k\in
  P(c)}$. This merging process halts precisely when all the graphs
$\{G^{P'}_k\}_{k\in P(c)}$ have been reduced into cliques. As is well known
(and easy to see),
the size of the smallest clique that a graph $G$ can be turned
into by merging non-adjacent vertices is given by the \emph{chromatic
  number}\footnote{The chromatic number of a graph $G$ is the smallest
  number of colours $\chi(G)$ needed to colour the vertices of $G$ so
  that no two adjacent vertices share the same colour.} $\chi(G)$ of
the graph $G$. This immediately gives the following.
\begin{proposition}
  For every $P'\sqsubseteq P$ in the subtree rooted at $P$ and
  constrained by $\{G^P_k\}_{k\in P(c)}$, we have
\begin{equation*}
  \sum_{k\in P(c)}\chi(G^P_k) \enspace \leq \enspace |P'|.
\end{equation*}
\end{proposition}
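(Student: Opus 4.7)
The plan is to decompose $|P'|$ by colour class and then, for each fixed colour $k$, invoke the classical characterisation of the chromatic number as the minimum number of vertices obtainable by iteratively merging non-adjacent vertices. First I would observe that every class of $P'$ is contained in a single colour class of $P(c)$, since this holds for $P$ (the vertices of $\{G^P_k\}_{k\in P(c)}$ partition $P$ according to colour) and along any branch of the search tree rule~D1 only ever merges classes of the same colour. Consequently
\[
|P'| \;=\; \sum_{k\in P(c)} |V(G^{P'}_k)|,
\]
and it suffices to prove $\chi(G^P_k) \leq |V(G^{P'}_k)|$ for each $k$ separately.

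Next I would argue that, for each $k$, the graph $G^{P'}_k$ is obtained from $G^P_k$ by a sequence of operations of two kinds: edge insertions (rule~D1.1), and merges of two currently non-adjacent vertices into a single vertex whose neighbourhood is the union of the old ones (rules~D1.2 and~D2). Only the merges reduce vertex counts. Crucially, since edge insertions only \emph{add} constraints, every pair of vertices merged along the path was already non-adjacent in $G^P_k$. Hence if $V(G^{P'}_k)=\{V_1,\ldots,V_r\}$ where $V_j$ is the union of the set $S_j\subseteq V(G^P_k)$ of original vertices merged into it, each $S_j$ is independent in $G^P_k$.

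The sets $S_1,\ldots,S_r$ partition $V(G^P_k)$ into $r$ independent sets, and so yield a proper $r$-colouring of $G^P_k$. By definition of the chromatic number, $\chi(G^P_k) \leq r = |V(G^{P'}_k)|$. Summing over $k\in P(c)$ gives the desired inequality
\[
\sum_{k\in P(c)} \chi(G^P_k) \;\leq\; \sum_{k\in P(c)} |V(G^{P'}_k)| \;=\; |P'|.
\]

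The one step that requires care is the monotonicity claim: that a merge performed at some descendant of $P$ between two classes $q_1,q_2$ of the intermediate partition guarantees non-adjacency of the corresponding \emph{original} vertex sets $S_1,S_2 \subseteq V(G^P_k)$ in $G^P_k$ itself, rather than only non-adjacency in whatever pruned graph is current at the time of the merge. This is immediate once one notes that D1.1 and D2 never delete edges, so the edge set grows monotonically along the path from $P$ to $P'$ (modulo the vertex identifications); a pair that is non-adjacent at some descendant was a fortiori non-adjacent at $P$. Beyond this bookkeeping, the argument is essentially the standard equivalence between proper colourings and independent-set partitions.
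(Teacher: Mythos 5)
Your proof is correct and follows essentially the same route as the paper: the paper compresses the whole argument into the remark that the smallest clique obtainable from a graph $G$ by merging non-adjacent vertices has $\chi(G)$ vertices, which is exactly the independent-set--partition characterisation of proper colourings that you spell out. The only added value in your write-up is that you explicitly verify the one point the paper leaves implicit --- that edge sets only grow along a branch, so vertices merged at a descendant were already non-adjacent in $G^P_k$ --- and that your argument applies directly to interior nodes of the subtree rather than only to the leaves where the graphs have become cliques.
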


Determining the chromatic number of an arbitrary graph is an
\textsf{NP}-hard problem. Fortunately, we can restrict our graphs to
be of a special form: graphs that consist only of a clique and some
isolated vertices. For these graphs, the chromatic numbers are given
by the sizes of the cliques.

To see how to maintain graphs in this form, consider as a base case
the initial partition~$I$. Here, $E(G^I_k)=\varnothing$ for all $k\in
P(c)$, so $G^I_k$ is of our special form---it has a clique of size~1.
For a general partition~$P$, we go through the branching rules
D1--D2.

\begin{enumerate}[D1: ]
\item \emph{$P$ is constructible:} Since we are allowed to choose an
  arbitrary ordering $\{p_i,q_i\}$, $i=1,\ldots,N$, for the children
  $P[p_i,q_i]$, we design an ordering that preserves the special form
  of the graphs. For a graph $G$ of our special form, let
  $K(G)\subseteq V(G)$ consist of those vertices that are part of the
  clique in $G$. In the algorithm, we first set $H_k = G^P_k$ for all
  $k\in P(c)$ and repeat the following process until every graph $H_k$
  is a complete clique.
  \begin{enumerate}[1. ]
  \item Pick some colour $k\in P(c)$ and an isolated vertex $v \in
    V(H_k)\smallsetminus K(H_k)$.
  \item Process the pairs $\{v,u\}$ for all $u\in K(H_k)$ in some
    order. By the end, update $H_k$ to include all the edges $\{v,u\}$
    that were just processed (the size of the clique in $H_k$
    increases by one).
  \end{enumerate}
  A moment's inspection reveals that when the graphs $G^P_k$ are of
  our special form, so are all of the derived graphs passed on to the
  children of $P$.
\item \emph{$P$ is not constructible:} If the algorithm decides to
  continue the search from a partition $P'=P[p_1,p_2]$, for some
  $p_1,p_2\subseteq l\in P(c)$, we have $\{p_1,p_2\}\notin
  E(G^P_l)$. This means that either $p_1,p_2\in V(G^P_l)\smallsetminus
  K(G^P_l)$, in which case we are merging two isolated vertices, or
  one of $p_1$ and $p_2$ is part of the clique $K(G^P_l)$, in which
  case we merge an isolated vertex to the clique. In both cases, we
  maintain the special form in the graphs $\{G^{P'}_k\}_{k\in P(c)}$.
\end{enumerate}

\subsection{Traversing the Search Tree}

When running a branch-and-bound algorithm we maintain a ``current best
solution'' discovered so far as a global variable. This solution gives
an upper bound for the minimal value of the tile set size and can be
used to prune such search branches that are guaranteed (by the
bounding function) to only yield solutions worse than the current
best.   There are two general strategies to traverse a
branch-and-bound search tree: \emph{Depth-First Search} and
\emph{Best-First Search}~\cite{ClPe99}. Our description of the search
tree for the lattice $X$ is general enough to allow either of these
strategies to be used in the actual implementation of the algorithm.
In the following section we give performance data on our DFS
implementation of the PS-BB algorithm.

\subsection{Results}
\label{sec:results}

\begin{figure}[t]
  \centering
  \includegraphics[width=\textwidth]{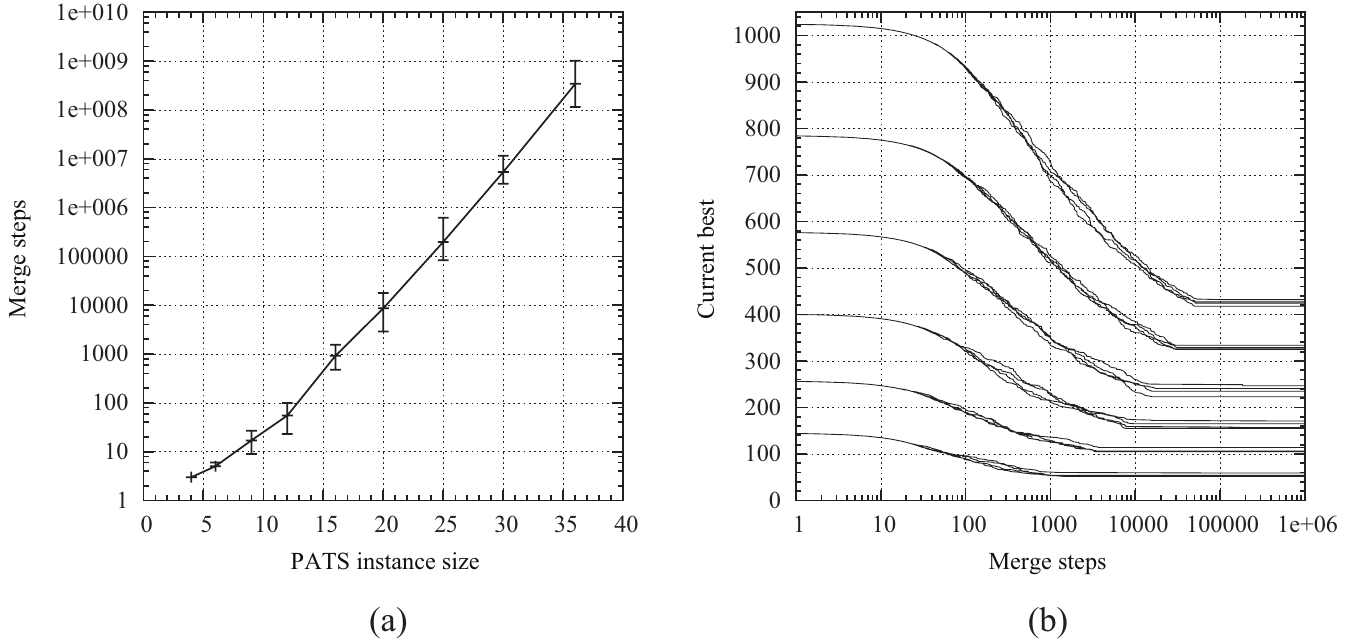}
  \caption{(a)~Running time of the PS-BB algorithm (as measured by the
    number of merge operations) to solve random 2-coloured
    near-square-shaped instances of the PATS problem. (b)~Evolution of
    the tile set size of the ``current best solution'' for the PATS problem for
    random 2-coloured instances of sizes from $12\times12$ up to $32\times32$.}
  \label{fig:runtime}
\end{figure}
\begin{figure}[t]
  \centering
  \includegraphics[width=\textwidth]{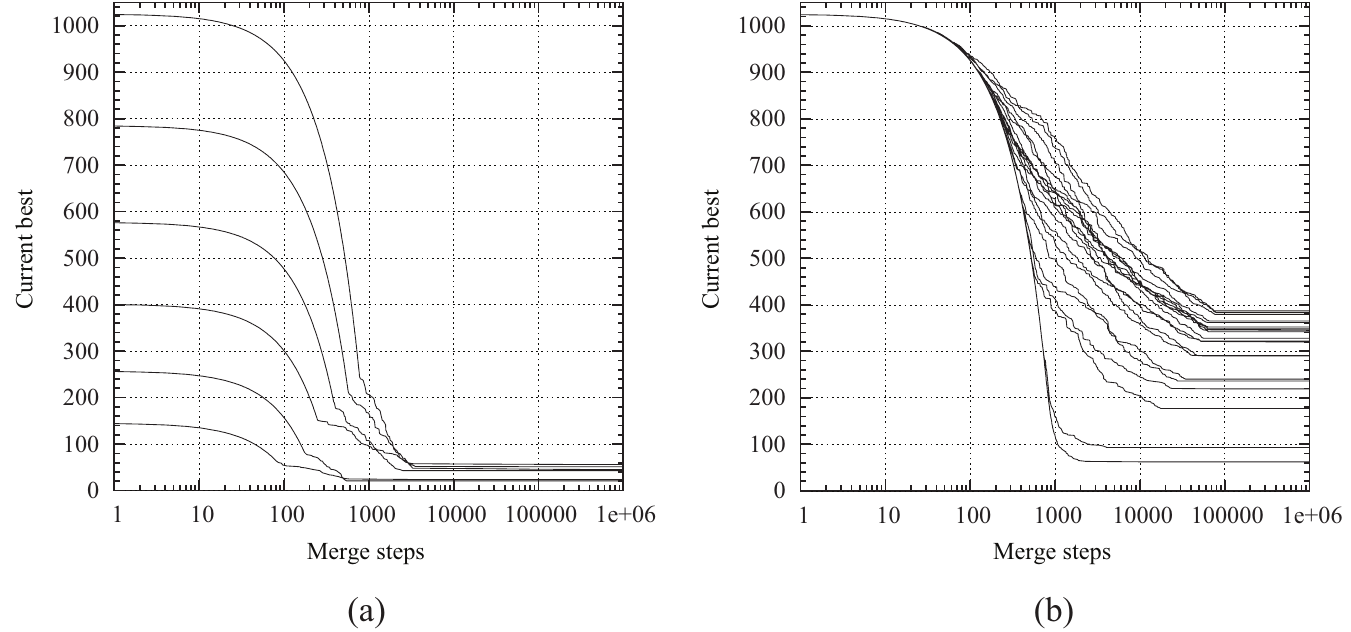}
  \caption{Evolution of the ``current best solution'' of the PS-BB
    algorithm for (a) the Sierpinski triangle pattern and for (b) the binary
    counter pattern. The lines represent (a) single runs for instance sizes
    from $12\times12$ up to $32\times32$ and (b) several runs for instance size
    $32\times32$. Randomisation in the DFS has a clear effect on
    the performance of the algorithm in the case of the binary counter
    pattern, but not in the case of the Sierpinski triangle pattern.}
  \label{fig:curbest}
\end{figure}

The running time of the PS-BB algorithm is proportional---up to a
polynomial factor---to the number of partitions the algorithm
visits. Hence, we measure the running time in terms of the number of
merge operations performed in the search.  Figure~\ref{fig:runtime}(a)
presents the number of such merge operations in order to find a minimal solution
for random 2-coloured instances of the PATS problem. The algorithm was
executed for instance sizes $2\times2, 2\times 3, 3\times 3, \ldots,
5\times6$ and $6\times6$; the 20th and 80th percentiles are shown
alongside the median of 21 separate runs for each instance size. For
the limiting case $6\times6$, the algorithm spent on the order of two
hours of (median) computing time on a 2.61 GHz AMD processor.

Even though branch-and-bound search is an exact method, it can be used
to find approximate solutions by running it for a suitable length of
time. Figure~\ref{fig:runtime}(b) illustrates how the best solution
found up to a point develops as increasingly many steps of the
algorithm are run. The figure provides data on random 2-coloured
instances of sizes
$12\times12, 16\times16, 20\times20, \ldots, 32\times32$.
Because we begin our search from the initial partition,
the best solution at the first step is precisely equal to the instance
size. For each size, several different patterns were used. The
algorithm was cut off after $10^6$ steps. By this time, an approximate
reduction of 58\% in the size of the tile set was achieved (cf.\ a
reduction of 43.5\% in Ma and Lombardi~\cite{MaLo08}).

Next, we consider two well known examples of structured patterns: the
discrete Sierpinski triangle and the binary counter (see Figures
\ref{fig:pat-srp32} and~\ref{fig:pat-bc32} for $32\times32$ instances
of both patterns). A tile set of size 4 is optimal for both of these
patterns, see e.g.\ Winfree~\cite{Winf98b} or Rothemund and Winfree~\cite{RoWi00}.
First, for the Sierpinski triangle pattern, we get a tile set reduction
of well over 90\% (cf.\ 45\% in Ma and Lombardi~\cite{MaLo08}) in
Figure~\ref{fig:curbest}(a). We used the same cutoff threshold and instance
sizes as in Figure~\ref{fig:runtime}(b).

Our description of the PS-BB algorithm leaves some room for
randomisation in deciding which search branch the DFS is to
explore next. This randomisation does not seem to affect the search
dramatically if considering the Sierpinski triangle pattern---the separate
single runs in Figure~\ref{fig:curbest}(a) are representative of an
average randomised run. By contrast, for the binary counter pattern,
randomised runs for a single instance size do make a
difference. Figure~\ref{fig:curbest}(b) depicts several separate runs
for instance size $32\times32$. Here, each run brings about a
reduction in solution size that oscillates between a reduction
achieved on a random 2-coloured instance (Figure~\ref{fig:runtime}(b)) and a
reduction achieved on the Sierpinski instance
(Figure~\ref{fig:curbest}(a)). This suggests that, as is characteristic of
DFS traversal, restarting the algorithm with different random seeds
may help with large instances that have small optimal solutions.  We
explore this opportunity for efficiency improvement further in
connection to the algorithm PS-H presented in the next section.

\section{Heuristically Guided Search for Small Tile Sets}
\label{sec:local_search}
\subsection{The PS-H Algorithm Scheme}\label{sec:PSH}

The PS-BB algorithm utilises effective pruning methods to reduce the
search space. Even though it offers significant reduction in the size
of tile sets compared to earlier approaches, it is in most cases still
too slow for patterns of practical size. Often it is not important to
find a provably minimal solution, but to find a reasonably small
solution in a reasonable amount of time.  To address this objective, we
present in the following a modification of the basic PS-BB algorithm
with a number of search-guiding heuristics. We call this approach the
\emph{partition-search with heuristics} (PS-H) algorithm scheme.

Whereas the pruning methods of the PS-BB algorithm try to reduce
the size of the search space in a ``balanced'' way, the PS-H
algorithm attempts to ``greedily'' optimise the order in which the
coarsenings of a partition are explored, in the hope of being directly
led to close-to-optimal solutions. Such opportunism may be expected
to pay off in case the success probability of the greedy exploration
is sufficiently high, and the process is restarted sufficiently often,
or equivalently, several runs are explored in parallel.

The basic heuristic idea is to try to minimise the effect that a merge
operation has on partition classes other than those which are
combined. This can be achieved by preferring to merge classes already
having as many common glues as possible. In this way one hopes to
extend the number of steps the search takes before it runs into a
conflict.  For example, when merging classes $p_1$ and $p_2$ such that
$f(p_1)_N = f(p_2)_N$ and $f(p_1)_E = f(p_2)_E$, the glues on the W
and S edges of all other classes are unaffected. This way, the search
avoids proceeding to a partition which is not constructible after the
merge operation is completed. Secondarily, we prefer merging classes
which already cover a large number of sites in $[m]\times[n]$. That
is, one tries to grow a small number of large classes instead of
growing all the classes at an equal rate.

We define the concept of the \emph{number of common glues} formally as
follows.

\begin{definition}
  Given a partition $P$ and a MGTA $f$ for $P$, the \emph{number of common
  glues} between classes $p,q \in P$ is defined by the function $G\colon P \times P
  \to \{0,1,2,3,4\}$,
  \begin{equation*}
    G(p,q) = \sum_{D\in\mathcal{D}} g(f(p)_D,f(q)_D),
  \end{equation*}
  where $g(\sigma_1,\sigma_2)= 1$ if $\sigma_1=\sigma_2$
  and 0 otherwise, for all $\sigma_1, \sigma_2 \in \Sigma$.
\end{definition}

Except for the bounding function, the PS-BB algorithm allows an
arbitrary ordering $\{p_i,q_i\}, i = 1,\ldots,N$, for the children
(coarsenings) $P[p_i,q_i]$ of a constructible partition $P$.
In the PS-H algorithm, we choose the ordering using the following
heuristics. First form the set
\begin{equation*}
  H \coloneqq \{\{p,q\} \enspace|\enspace p,q\in P,\enspace p \ne q,\enspace \exists r\in P(c)\colon\enspace p,q\subseteq r\}
\end{equation*}
of class pairs of same colour, and then repeat the following process
until $H$ is empty.

\begin{enumerate}[{H}1. ]
\item Set $K \coloneqq H$.
\item Maximise the number of common glues:
  \[
  K \coloneqq \{\{p,q\}\in K \enspace|\enspace G(p,q) \ge G(u,v)\text{ for
    all }\{u,v\} \in K\}.
  \]
\item Maximise the size of the larger class:
  \[
  K \coloneqq \{\{p,q\}\in K \enspace|\enspace \max\{|p|,|q|\} \ge
  \max\{|u|,|v|\}\text{ for all }\{u,v\} \in K\}.
  \]
\item Maximise the size of the smaller class:
  \[
  K \coloneqq \{\{p,q\}\in K \enspace|\enspace \min\{|p|,|q|\} \ge
  \min\{|u|,|v|\}\text{ for all }\{u,v\} \in K\}.
  \]
\item Pick some pair $\{p,q\}\in K$ at random and visit the partition
  $P[p,q]$.
\item Remove $\{p,q\}$ from $H$:
  \[
  H \coloneqq H \smallsetminus \{\{p,q\}\}.
  \]
\end{enumerate}

The PS-H algorithm also omits the pruning process utilised by the
PS-BB algorithm. That way, it aims to get to the small solutions
quickly by reducing the computational resources used in a single merge
operation.

Since step H5 of the heuristics above leaves room for randomisation,
the PS-H algorithm performs differently with different random
seeds. While some of the randomised runs may lead to small solutions
quickly, others may get sidetracked into worthless expanses of the
solution space.  We make the best of this situation by running several
executions of the algorithm in parallel, or equivalently, restarting
the search several times with a different random seed. The notation
PS-H$_n$ denotes the heuristic partition search algorithm
with $n$ parallel search threads.  The solution found by the
PS-H$_n$ algorithm is the smallest solution found by any of
the $n$ parallel threads.

\subsection{Results}\label{sec:psh-res}

In this section, we present results on the performance of the
PS-H$_n$ algorithm for $n=1,2,4,8,16,32$
and compare it to the previous PS-BB algorithm.
We consider several different finite 2-coloured input patterns, two of
which were analysed also previously using the PS-BB algorithm:
the discrete Sierpinski triangles of sizes $32\times32$
(Figure~\ref{fig:pat-srp32}) and $64\times64$, and the binary counter of size
$32\times32$ (Figure~\ref{fig:pat-bc32}).  Furthermore, we introduce
a 2-coloured ``tree'' pattern of size $23\times23$ (Figure~\ref{fig:pat-ct23})
as well as a 15-coloured pattern of size $20\times10$ based on a CMOS full
adder design (Figure~\ref{fig:pat-fa}).\footnote{For an explanation of the
notation used in Figure~\ref{fig:pat-fa}, see Appendix~\ref{sec:cnfet}.}
While the Sierpinski triangle
and binary counter patterns are known to have minimal solutions of
4 tiles, the minimal solutions for the tree pattern and the full
adder pattern are unknown.
The experiments were conducted on a high performance computing cluster equipped
with 2.6~GHz AMD Opteron 2435 processors and Scientific Linux~6 operating system.

\begin{figure}[t!]
  \centering
  \subfigure[]{
    \includegraphics[width=0.3\textwidth]{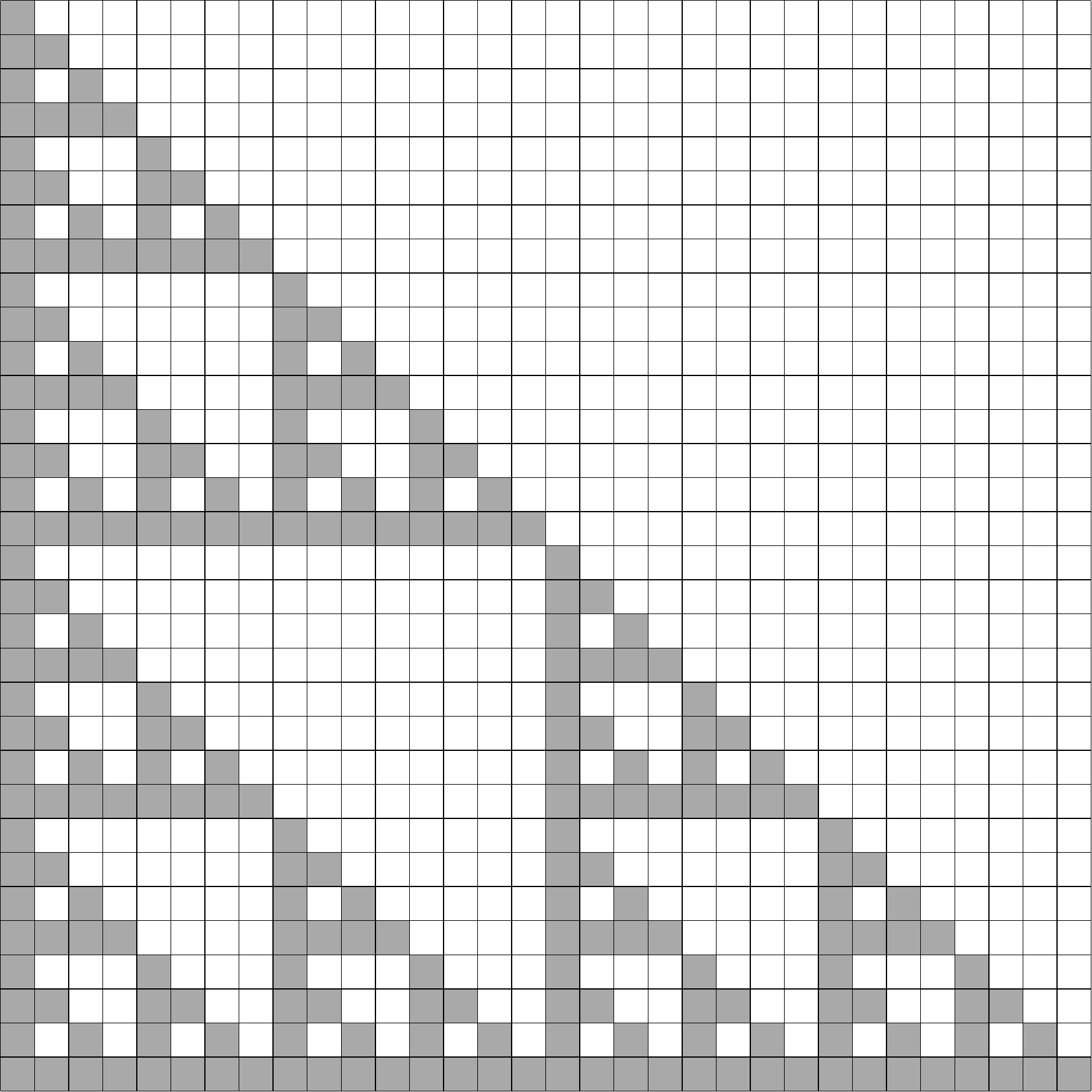}
    \label{fig:pat-srp32}
  }
  \subfigure[]{
    \includegraphics[width=0.3\textwidth]{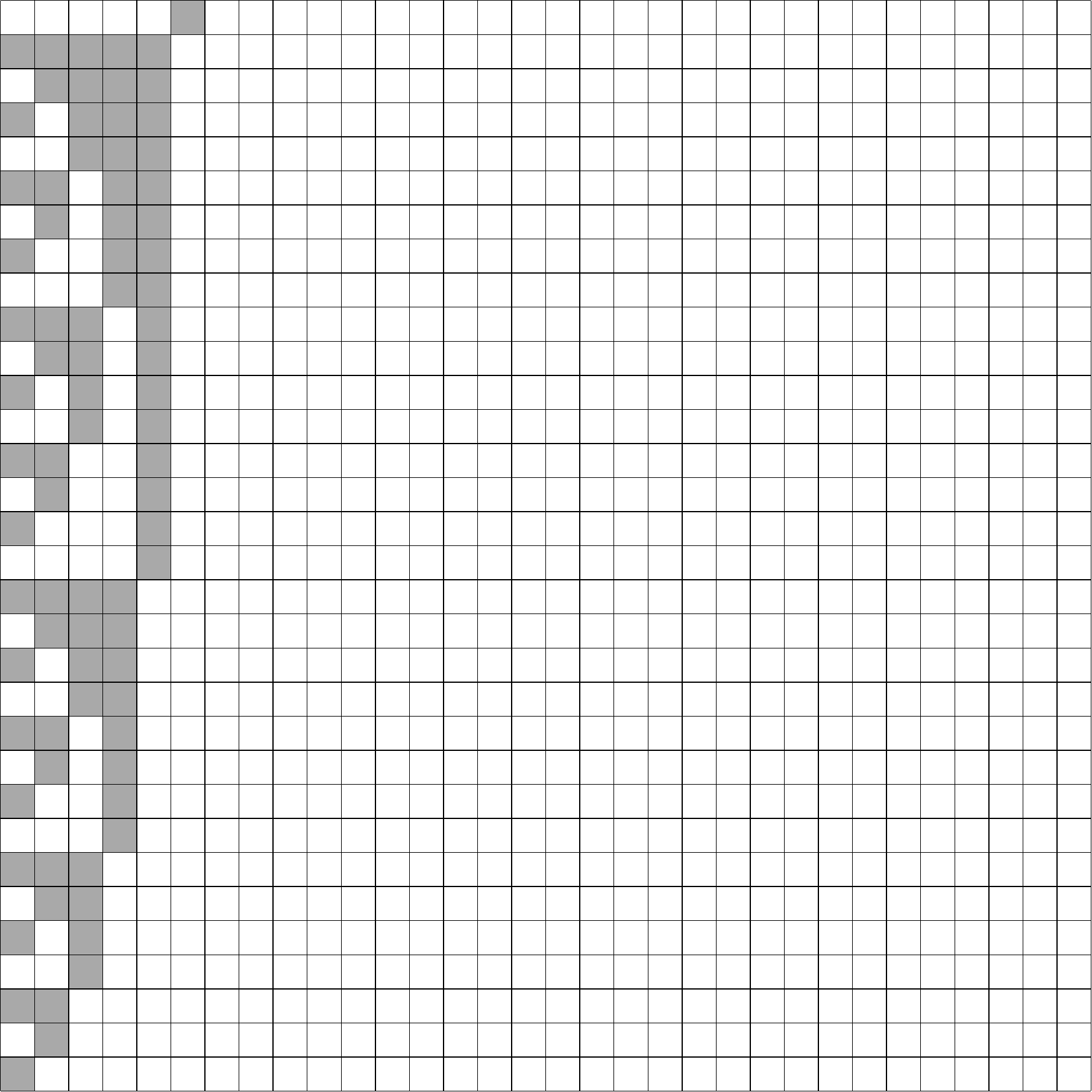}
    \label{fig:pat-bc32}
  }
  \subfigure[]{
    \includegraphics[width=0.3\textwidth]{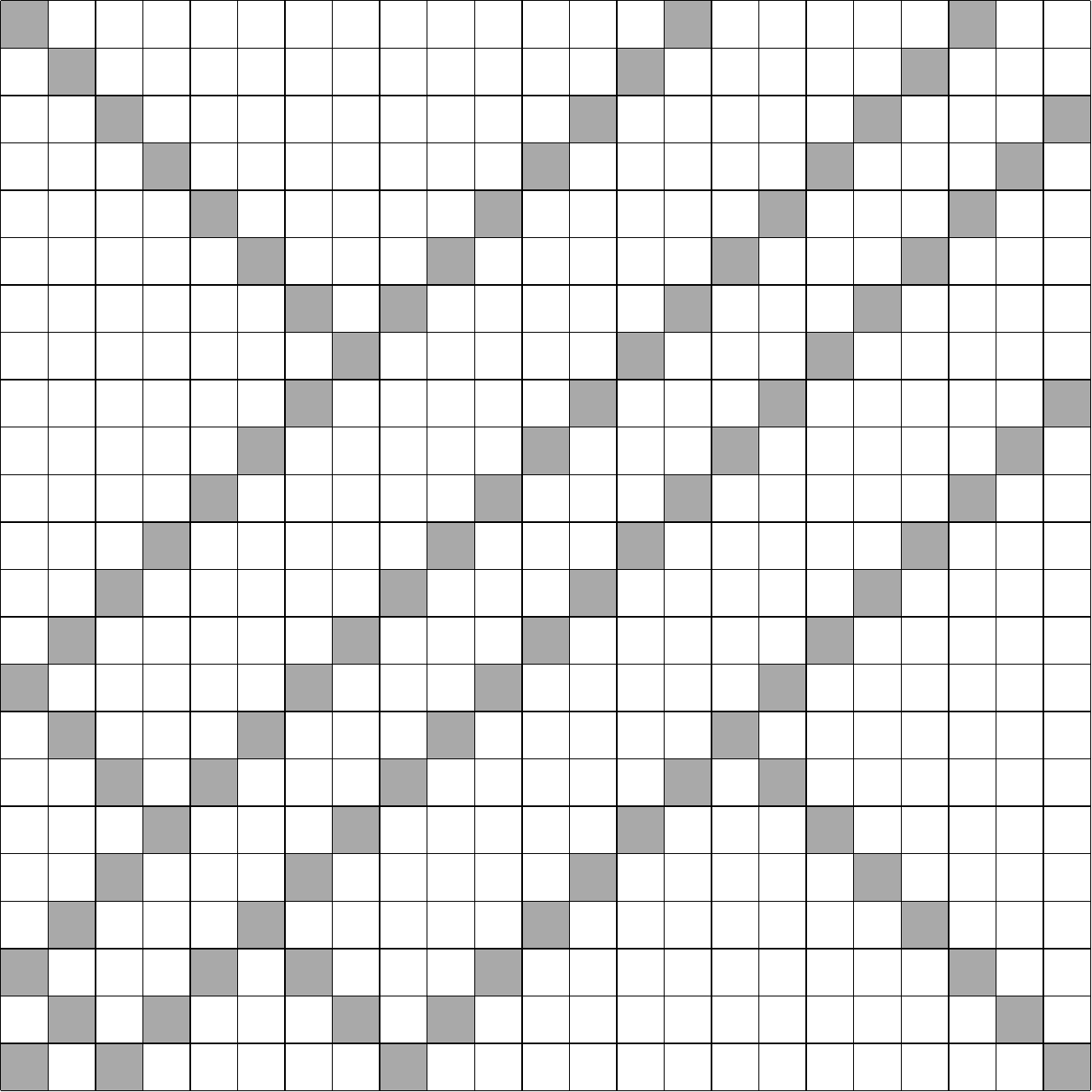}
    \label{fig:pat-ct23}
  }
  \subfigure[]{
    \includegraphics[width=0.98\textwidth]{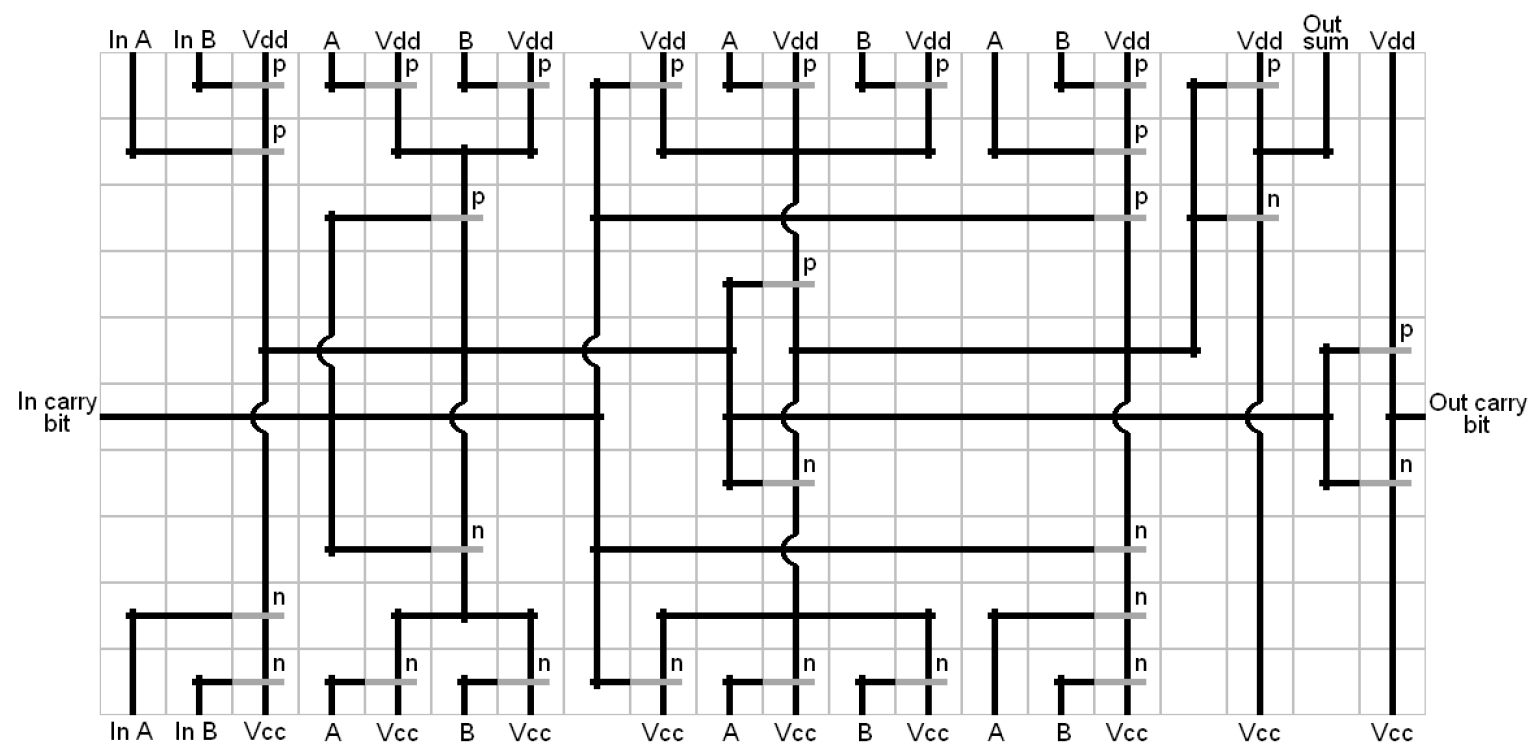}
    \label{fig:pat-fa}
  }
  \caption{\subref{fig:pat-srp32} The $32\times32$ Sierpinski triangle
    pattern. \subref{fig:pat-bc32} The $32\times32$ binary counter
    pattern.  \subref{fig:pat-ct23} The $23\times23$ ``tree'' pattern.
    \subref{fig:pat-fa} A CMOS full adder design that induces a
    15-colour $20\times10$ pattern.}
  \label{fig:patterns}
\end{figure}

Figure~\ref{fig:size-srp} presents the evolution of the ``current best
solution'' as a function of time for the \subref{fig:size-srp32-ct}
$32\times32$ and \subref{fig:size-srp64-ct} $64\times64$ Sierpinski
triangle patterns. To allow fair comparison, Figures~\ref{fig:size-srp32-tt}
and~\ref{fig:size-srp64-tt} present the same data with respect to the
total processing time taken by all the executions that run in parallel.
The experiments were repeated 21 times and the median of
the results is depicted. In 37\% of all the individual runs\footnote{In total there were $1\cdot21 + 2\cdot21 + 4\cdot21 + \cdots + 32\cdot21=1323$ runs for each input pattern.} conducted, the PS-H
algorithm was able to find the optimal 4-tile solution for the
$32\times32$ Sierpinski triangle pattern in less than 30 seconds. A similar
percentage for the $64\times64$ Sierpinski triangle pattern is 34\% in one
hour.  Remarkably, the algorithm performs only from 1030 to 1035 and
from 4102 to 4107 merge steps before arriving at the optimal solution
for the $32\times32$ and $64\times64$ patterns, respectively. In other
words, the search rarely needs to backtrack.  In contrast, the
smallest solutions found by the PS-BB algorithm have 42 tiles, reached
after $1.4\cdot10^6$ merge steps, and 95 tiles, reached after
$5.9\cdot10^6$ merge steps.

\begin{figure}[p]
  \centering
  \subfigure[]{
    \includegraphics[width=0.47\textwidth]{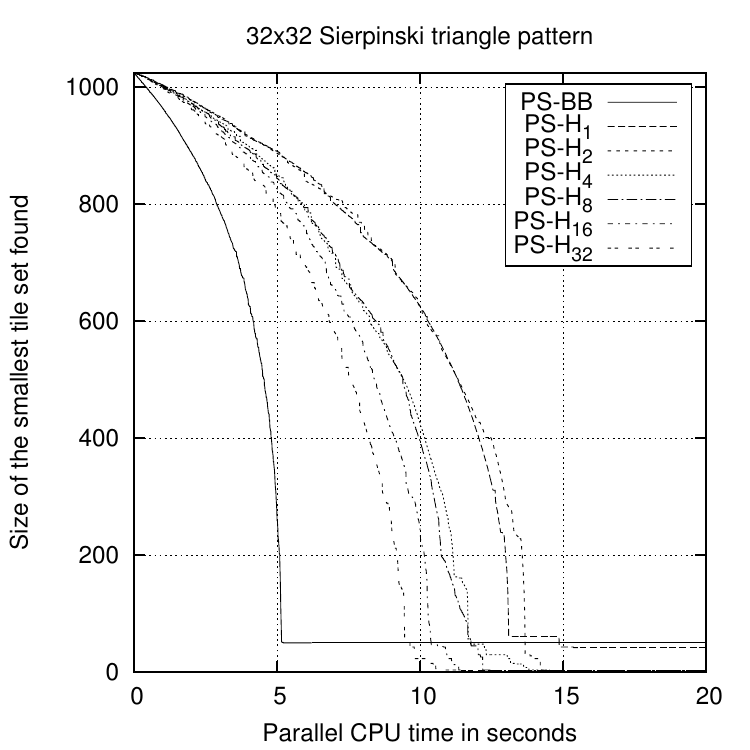}
    \label{fig:size-srp32-ct}
  }
  \subfigure[]{
    \includegraphics[width=0.47\textwidth]{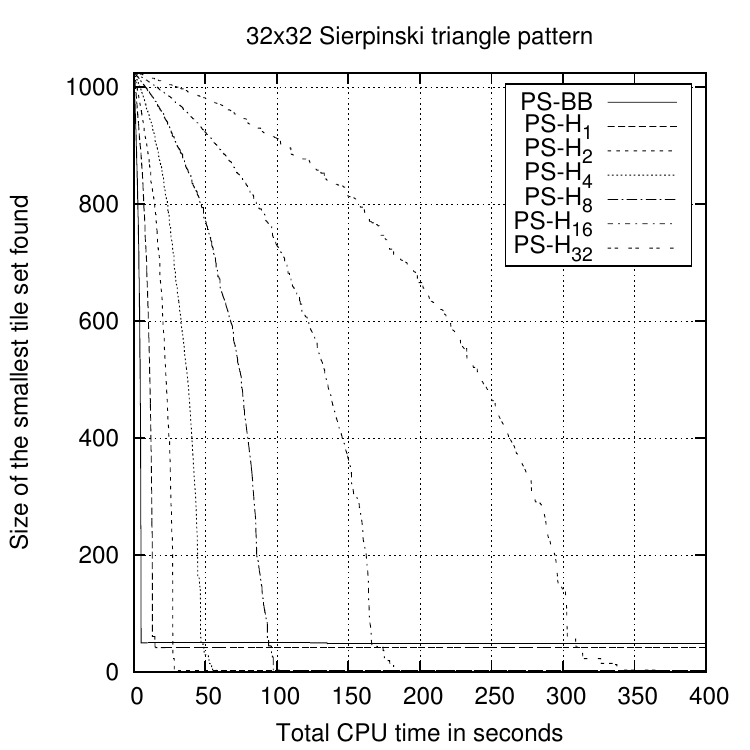}
    \label{fig:size-srp32-tt}
  }
  \subfigure[]{
    \includegraphics[width=0.47\textwidth]{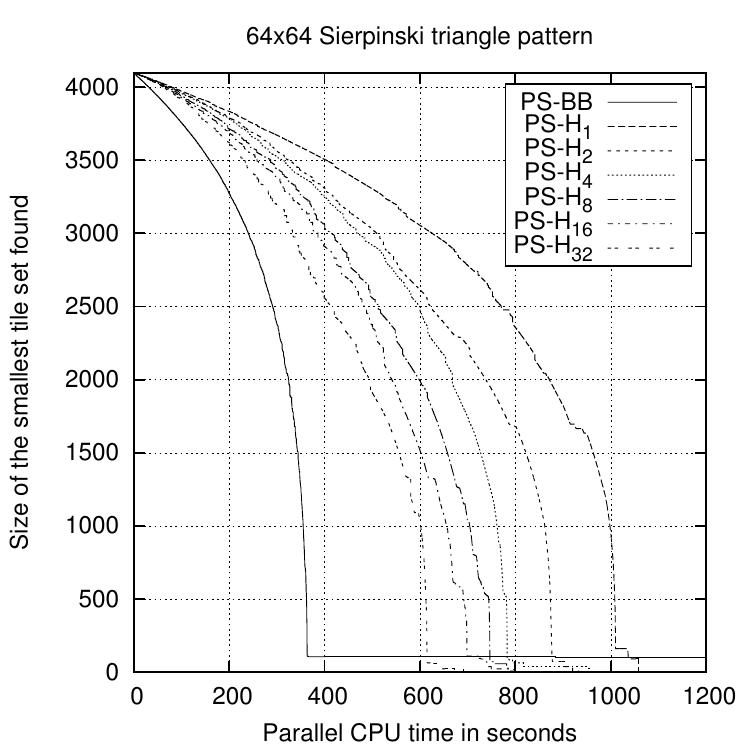}
    \label{fig:size-srp64-ct}
  }
  \subfigure[]{
    \includegraphics[width=0.47\textwidth]{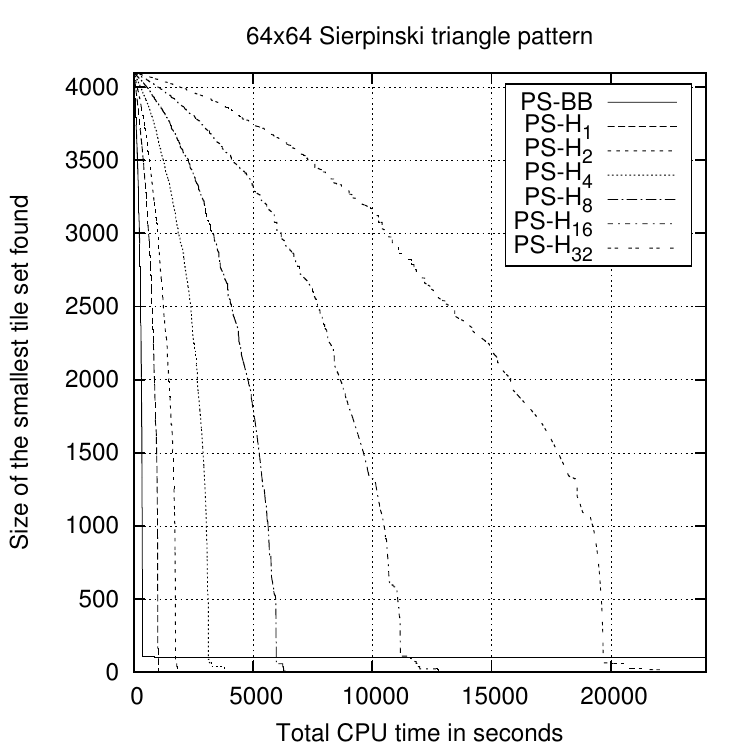}
    \label{fig:size-srp64-tt}
  }
  \caption{Evolution of the smallest tile set found for the $32\times32$ and
    $64\times64$ Sierpinski triangle patterns as a function of time.  The
    time axes measure \subref{fig:size-srp32-ct},
    \subref{fig:size-srp64-ct} CPU time and
    \subref{fig:size-srp32-tt}, \subref{fig:size-srp64-tt} CPU time
    multiplied by the number of parallel executions.}
  \label{fig:size-srp}
\end{figure}

\begin{figure}[p]
  \centering
  \subfigure[]{
    \includegraphics[width=0.47\textwidth]{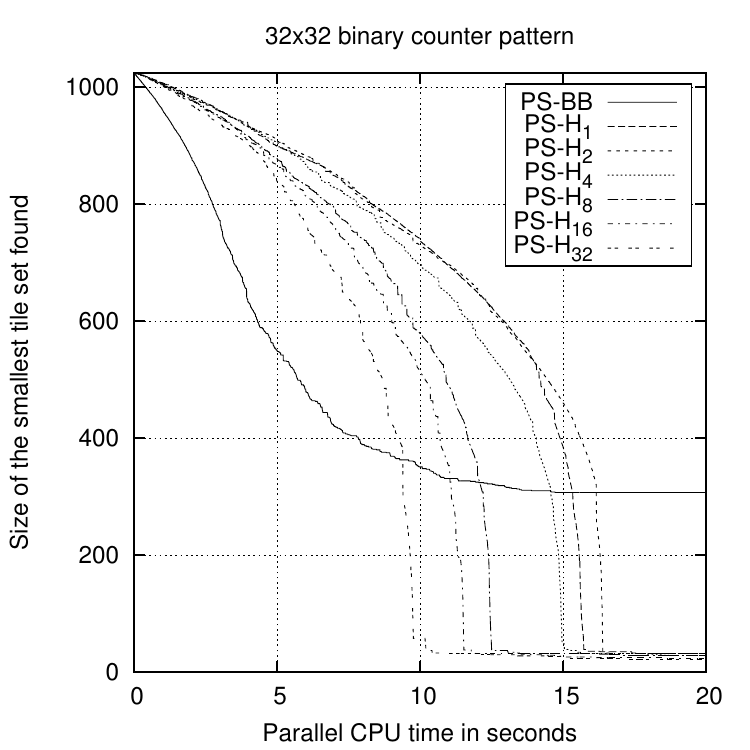}
    \label{fig:size-bc32-ct}
  }
  \subfigure[]{
    \includegraphics[width=0.47\textwidth]{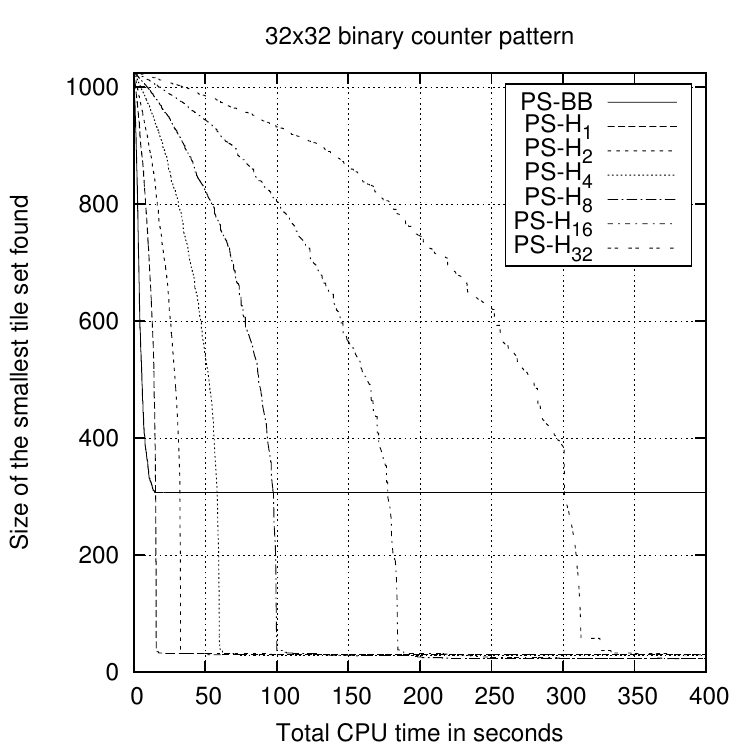}
    \label{fig:size-bc32-tt}
  }
  \subfigure[]{
    \includegraphics[width=0.47\textwidth]{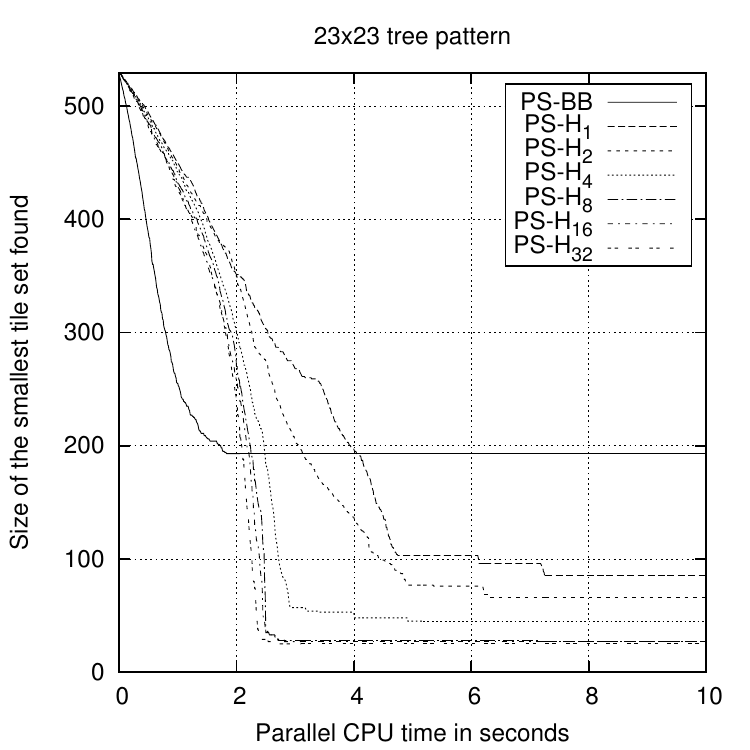}
    \label{fig:size-ct23-ct}
  }
  \subfigure[]{
    \includegraphics[width=0.47\textwidth]{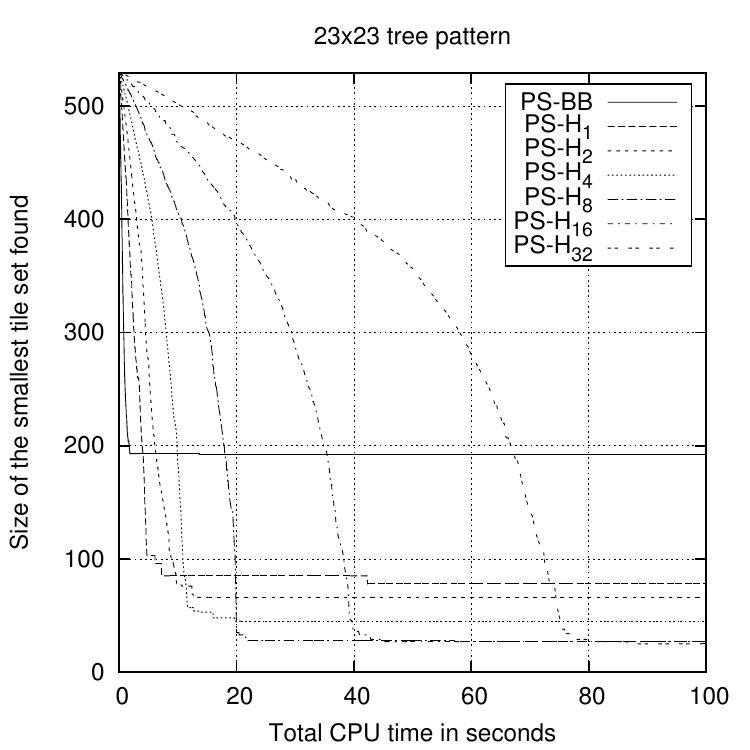}
    \label{fig:size-ct23-tt}
  }
  \caption{Evolution of the smallest tile set found for the $32\times32$
    binary counter and $23\times23$ tree patterns as a function of time.
    The time axes measure \subref{fig:size-bc32-ct},
    \subref{fig:size-ct23-ct} CPU time and \subref{fig:size-bc32-tt},
    \subref{fig:size-ct23-tt} CPU time multiplied by the number of
    parallel executions.}
  \label{fig:size-bcct}
\end{figure}

In Figure~\ref{fig:size-bcct} we present the corresponding results for
the $32\times32$ binary counter and $23\times23$ tree patterns. The
size of the smallest solutions found by the PS-H$_{32}$
algorithm were 20 (cf.\ 307 by PS-BB) and 25 (cf.\ 192 by PS-BB)
tiles, respectively. In the case of the tree pattern, the
parallelisation brings significant advantage over a single
run. Finally, Figures~\ref{fig:size-fa-ct}--\ref{fig:size-fa-tt} show
the results for the $20\times10$ 15-colour CMOS full adder pattern. In
this case, the improvement over the previous PS-BB algorithm is less
clear.  The PS-H$_{32}$ algorithm is able to find a solution of
58 tiles, whereas the PS-BB algorithm gives a solution of 69 tiles.

\begin{figure}[t!]
  \centering
  \subfigure[]{
    \includegraphics[width=0.47\textwidth]{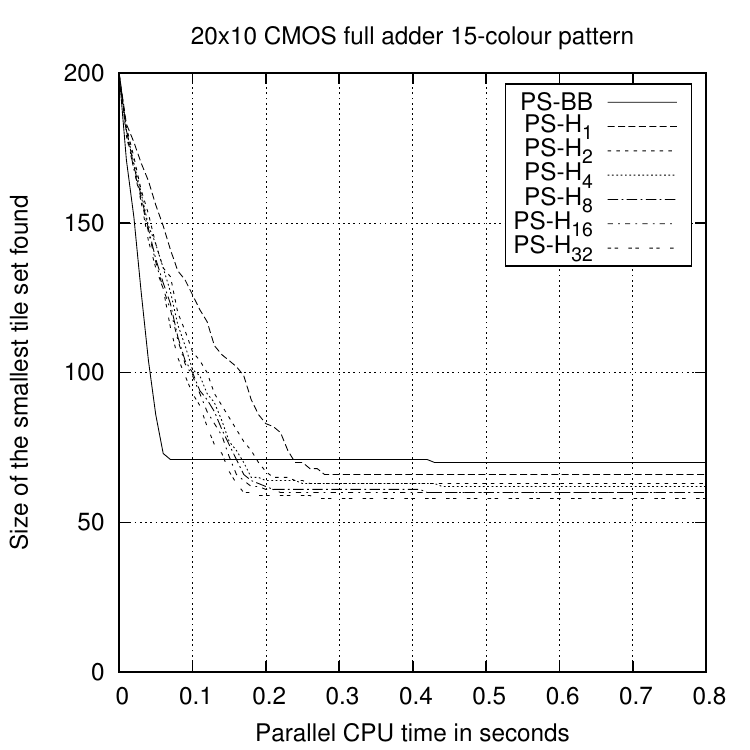}
    \label{fig:size-fa-ct}
  }
  \subfigure[]{
    \includegraphics[width=0.47\textwidth]{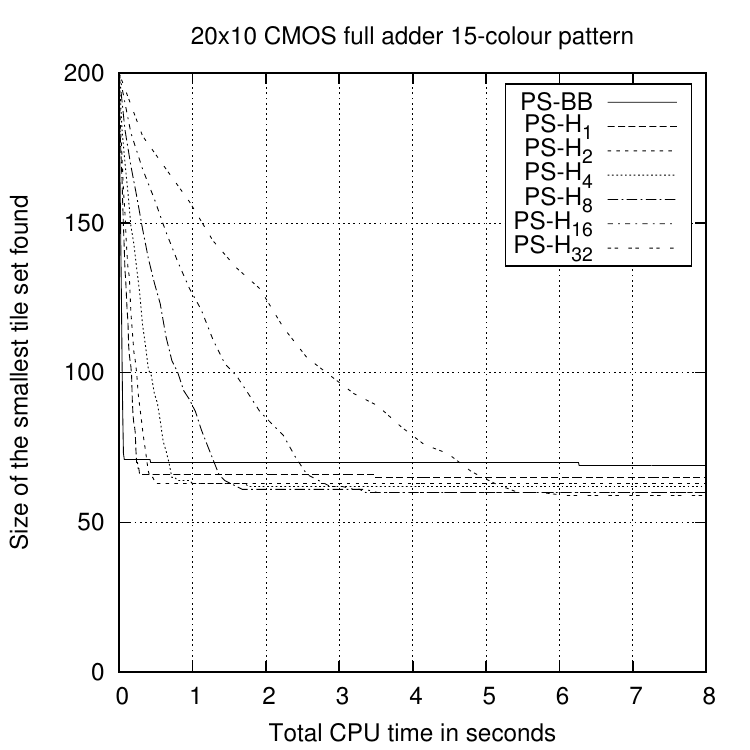}
    \label{fig:size-fa-tt}
  }
  \caption{Evolution of the
    smallest tile set found for the $20\times10$ full adder pattern as a
    function of time.  The time axes measure \subref{fig:size-fa-ct} CPU
    time and \subref{fig:size-fa-tt} CPU time multiplied by the
    number of parallel executions.
  }
  \label{fig:size-fa-sm}
\end{figure}

\section{Answer Set Programming for Minimal Tile Sets}
\label{sec:asp}
\subsection{An ASP Model for PATS}

Answer Set Programming (ASP)~\cite{Lifs08} is a declarative logic
programming paradigm for solving difficult combinatorial search
problems. In ASP, a problem is described as a logic program, and an
answer set solver is then used to compute stable models (answer sets)
of the logic program.  The ASP paradigm can be applied also to the
PATS problem. In the following we give a brief description on how to
transform the PATS problem to an ASP program using a modelling
language that is accepted by ASP grounders such as
\textsc{lparse}~\cite{Syrj98} or \textsc{gringo}~\cite{GKKS11}.

First, we define a constant for each position of the grid
$[m]\times[n]$, each colour, each available tile type and each
available glue type. After that, a number of choice rules are
introduced to associate a tile type with each position of the grid, a
glue type with each of the four sides of the tile types and a colour
with each of the tile types. Next, we use basic rules to make the
glues of every pair of adjacent tiles match and to make the tile
system deterministic, i.e.\ to ensure that every tile type has a
unique pair of glues on its W and S edges.
Finally, we compile the target pattern to a set of rules that associate
every position of the grid with the desired colour.

The above-described program is given to a grounder, which computes an
equivalent variable-free program. The variable-free program is forwarded to an
answer set solver, which then outputs a tile type for each position of
the grid, given that such a solution exists. We run the programs
repeatedly and increment the number of available tile and glue
types, until a solution is found.

\subsection{Results}
\label{sec:results-asp}

We used grounder \textsc{gringo}~3.0.5~\cite{GKKS11} and answer set solver
\textsc{clasp}~2.1.3~\cite{GKNS07} with default settings to run our experiments. A traditional
solver, \textsc{smodels}~\cite{NiSi97}, was also considered, but \textsc{clasp}
proved to be significantly faster in solving instances of the PATS problem.  We
consider two patterns having a minimal solution of 4 tiles: the Sierpinski
triangle and binary counter patterns. The programs were executed for
patterns of sizes $8\times8,16\times16,\ldots,256\times256$. We repeated the
experiments 21 times with different random seeds and the median running time is
presented in Figure~\ref{fig:sm-srp} for the Sierpinski triangle pattern and in
Figure~\ref{fig:sm-bc} for the binary counter pattern. The results include the
running time of both the grounder and the solver as well as all
the incremental steps needed until a solution is found. We were able to find
the minimal solution for both the $256\times256$ Sierpinski triangle pattern and
the $256\times256$ binary counter pattern in approximately 31 minutes of
(median) running time. The results were obtained on the same computing
cluster as the results in Section~\ref{sec:psh-res}.

Based on the above results, the ASP approach performs very well when
considering patterns with a small optimal solution. However, the
running time seems to increase dramatically with patterns that have a
larger optimal solution. Indeed, we were not able to find solutions for the
$23\times23$ tree pattern or the $20\times10$ CMOS full adder pattern using
the ASP approach.

\begin{figure}[t!]
  \centering

  \subfigure[]{
    \includegraphics[width=0.47\textwidth]{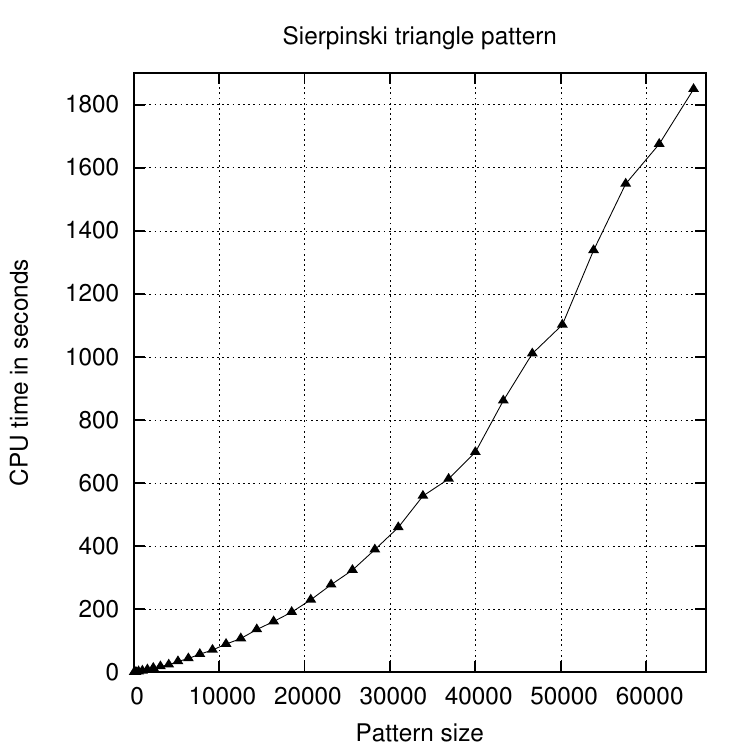}
    \label{fig:sm-srp}
  }
  \subfigure[]{
    \includegraphics[width=0.47\textwidth]{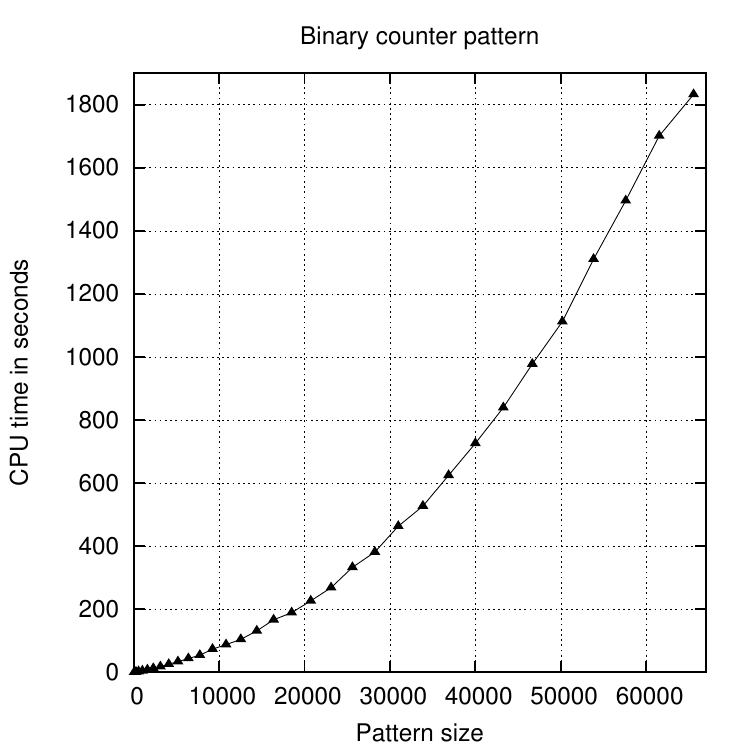}
    \label{fig:sm-bc}
  }
  \caption{Running time of \textsc{gringo} and \textsc{clasp} for the minimal solutions of the
    \subref{fig:sm-srp} Sierpinski triangle and \subref{fig:sm-bc} binary counter
    patterns as a function of pattern size.}
  \label{fig:size-sm}
\end{figure}

\section{The Reliability of Tile Sets}
\label{sec:reliability}
\subsection{The Kinetic Tile Assembly Model}

In the following, we utilise the kinetic Tile Assembly Model (kTAM)
to assess the reliability of various tile sets generated by the
PS-BB and PS-H algorithms.  The kTAM was introduced by
Winfree~\cite{Winf98b} as a kinetic counterpart of the aTAM.
Several variants of the kTAM exist~\citep{FuMu09,ScWi09}.
However, the main elements are similar.

The kTAM simulates two types of reactions, each involving an assembly,
i.e.\ a crystal structure consisting of several merged tiles, and a
tile: \textit{association} of tiles to the assembly (forward reaction)
and \textit{dissociation} (reverse reaction), see e.g.\ Figure~\ref{fig:kbc}.\footnote{Note that
  interactions between two tiles, such as forming a new assembly, as
  well as interactions between two assemblies, are not taken into
  consideration in the initial model~\cite{Winf98b}.  However, they
  are studied in some of the later developed variants of the kTAM, see
  e.g.\ Schulman and Winfree~\cite{ScWi09}.} In the first type of reaction, any tile can
attach to the assembly at any position (up to the assumption that tile
alignment is preserved), even if only a weak bond is formed; the rate
of this reaction $r_f$ is proportional to the concentration of free
tiles in the solution. In the second type of reaction, any tile can
detach from the assembly with rate $r_{r,b}$, $b\in\{0,\ldots,4\}$,
which is exponentially correlated with the total strength of the bonds
between the tile and the assembly. Thus, tiles which are connected to
the assembly by fewer or weaker bonds, i.e.\ incorrect ``sticky end''
matches, are more prone to dissociation than those which are strongly
connected by several bonds (well paired sticky end sequences).

\begin{figure}[t]
  \centering
  \includegraphics[width=\textwidth]{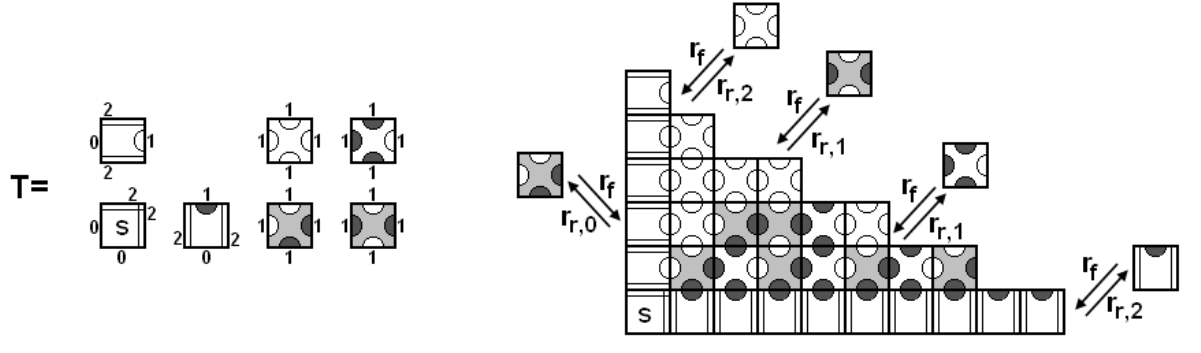}
  \caption{Possible association and dissociation reactions in the kinetic Tile Assembly Model. The rate of all the association reactions is identical; the rates of the dissociation reactions depend on the total strength of the bonds connecting a tile to the assembly.}
  \label{fig:kbc}
\end{figure}

In the following, we follow the notation of Winfree~\cite{Winf98b}.  For any
tile type $t$, the rate constant $r_f$ of the association (forward
reaction) of $t$ to an existing assembly is given by
\[
r_f=k_f[t],\quad (\mbox{in /sec})
\]
where $[t]$ is the concentration in solution of free tiles of type $t$
and $k_f$ is a temperature dependent parameter. In the case of DNA
double-crossover (DX) tiles, this parameter is given by the formula
\[
k_f=A_f e^{-E_f/ RT},
\]
where $A_f=5 \cdot 10^8\ \mathrm{/M/sec}$,
$E_f=4000\ \mathrm{cal/mol}$, $R=2\ \mathrm{cal/mol/K}$, and $T$ is the
temperature (in $\mathrm{K}$).

In the case of dissociation (reverse reaction), for a tile which is
connected to the assembly by a total bond strength $b$, the rate
constant $r_{r,b}$ is given by the formula
\[
r_{r,b}=k_fe^{\Delta G_b^o/RT},
\]
where $\Delta G_b^o$ is the standard free energy needed to
break $b$ bonds. In the case of DX tiles, as the glues of the
tiles are implemented using 5-base long single-stranded DNA molecules,
$\Delta G_b^o$ can be estimated using the nearest-neighbour
model~\cite{SaAS96} as
\[
\Delta G_b^o=e^{5b\left(11-\frac{4000\ \mathrm{K}}{T}\right)+3}\
\mathrm{cal/mol}.
\]
Moreover, $b$ can range with integer values from $0$ to $4$,
corresponding to the cases when the tile is totally erroneously placed
in the assembly (no bond connects it to the crystal) and when the tile
is fully integrated into the assembly (all its four sticky ends are
correctly matched), respectively.

In order to easily represent and scale the system, the free parameters
involved in the formulas of the rate constants $r_{f}$ and $r_{r,b}$
are re-distributed into just two dimensionless parameters, $G_{mc}$
and $G_{se}$, where the first is dependent on the initial tile
concentration and the second is dependent on the assembly
temperature:
\[
r_{f}=\hat{k}_f e^{-G_{mc}}, \qquad r_{r,b}=\hat{k}_f e^{-bG_{se}},
\]
where, in the case of DX tiles, $\hat{k}_f=e^3k_f$ is adjusted in
order to take into consideration possible entropic factors, such as
orientation or location of tiles. The previous parameter
re-distribution is made possible as a result of the assumption made in
the initial kTAM~\cite{Winf98b} that all tile types are provided into
the solution in similar concentrations, and that the consumption in
time of the free monomers is negligible compared to the initial
concentration.

\subsection{Computing the Reliability of a Tile Set}

By choosing appropriate physical conditions, the probability of errors
in the assembly process can be made arbitrarily low, at the cost of
reducing the assembly rate~\cite{Winf98b}. However, we would like to
be able to compare the error probability of different tile sets
producing the same finite pattern, under the same physical conditions.
Given the amount of time the assembly process is allowed to take, we
define the \emph{reliability of a tile set} to be the probability that
the assembly process of the tile system in question completes without
any incorrect tiles being present in the terminal configuration. In
the following, we present a method for computing the reliability of a
tile set, based on Winfree's analysis of the kTAM~\cite{Winf98b}, and
the notion of \emph{kinetic trapping} introduced within.

We call the W and S edges of a tile its \emph{input edges}. First, we
derive the probability of the correct tile being frozen at a
particular site under the condition that the site already has correct
tiles on its input edges. Let $M^1_{i,j}$ and $M^2_{i,j}$ be the
number of tile types having one mismatching and two mismatching input
glues, respectively, between them and the correct tile type for site
$(i,j)\in[m]\times[n]$. Now, for a deterministic tile set $T$, the
total number of tiles is $|T| = 1+M^1_{i,j}+M^2_{i,j}$ for any
$(i,j)\in[m]\times[n]$. Given that a site has the correct tiles on its
input edges, a tile is correct for that site if and only if it has two
matches on its input edges.

In what follows, we assume that correct tiles are attached at sites
$(i-1,j)$ and $(i,j-1)$. The model for kinetic trapping~\cite{Winf98b}
gives four distinct cases in the situation preceding the site $(i,j)$
being frozen by further growth. To each of these cases we can
associate an ``off-rate'' for the system to exit its current state:
(E) An empty site, with off-rate $|T|r_f$.  (C) The correct tile,
with off-rate $r_{r,2}$. (A) A tile with one match, with off-rate
$r_{r,1}$. (I) A tile with no matches, with off-rate
$r_{r,0}$. Additionally, we have two sink states FC and FI, which
represent frozen correct and frozen incorrect tiles, respectively. The
rate of a site being frozen is equal to the rate of growth $r^* =
r_f-r_{r,2}$. Figure~\ref{figFk} describes the dynamics of the system.
Let $p_S(t)$ denote the probability of the site being in state $S$
after $t$ seconds for all $S\in\{\mathrm{E, C, A, I, FC, FI}\}$. To
compute the frozen distribution, we write the rate equations for the model of
kinetic trapping from Figure~\ref{figFk} as follows:\footnote{The notation
$\mathbf{\dot{p}}(x)$ is used to denote the derivative of $\mathbf{p}$ with
respect to time.}
\begin{figure}[t!]
  \centering
  \includegraphics{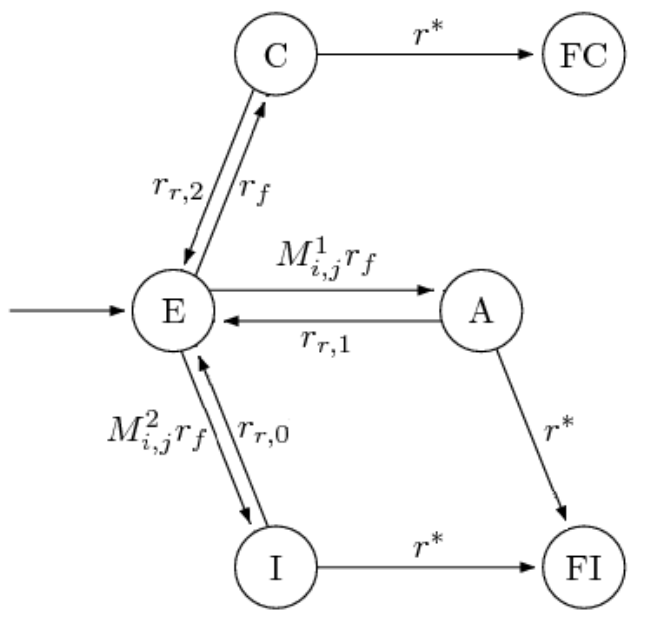}
  \caption{The dynamics of the kinetic trapping model.}
  \label{figFk}
\end{figure}
\[
  M\mathbf{p}(t) \coloneqq
  \begin{bmatrix}
    -|T|r_f     & r_{r,2}      & r_{r,1}      & r_{r,0}      & 0\quad& 0 \\
    r_f         & -r_{r,2}-r^* & 0           & 0           & 0\quad& 0 \\
    M^1_{i,j}r_f & 0           & -r_{r,1}-r^* & 0           & 0\quad& 0 \\
    M^2_{i,j}r_f & 0           & 0           & -r_{r,0}-r^* & 0\quad& 0 \\
    0           & r^*         & 0           & 0           & 0\quad& 0 \\
    0           & 0           & r^*         & r^*         & 0\quad& 0
  \end{bmatrix}
  \begin{bmatrix}
    p_\mathrm{E}(t) \\
    p_\mathrm{C}(t) \\
    p_\mathrm{A}(t) \\
    p_\mathrm{I}(t) \\
    p_{\mathrm{FC}}(t) \\
    p_{\mathrm{FI}}(t)
  \end{bmatrix}
  = \mathbf{\dot{p}}(t),
\]
where $\mathbf{p}(0) = \begin{bmatrix} 1 & 0 & 0 & 0 & 0 & 0 \end{bmatrix}^T$.
To compute the steady-state probability of the site being frozen with
the correct tile, i.e.\ $p_{\mathrm{FC}}(\infty)$, we make use of the
steady state of the related flow problem~\cite{Winf98b}:\footnote{By
  the definition of the kinetic trapping model~\cite{Winf98b}, it is
  assumed that a unit amount of material is supplied into state
  $\mathrm{E}$ of the system at any time point.}
\[
  M\mathbf{p}(\infty) =
  \begin{bmatrix}
    1 & 0 & 0 & 0 & p_{\mathrm{FC}}(\infty) & p_{\mathrm{FI}}(\infty)
  \end{bmatrix}^T
  = \mathbf{\dot{p}}(\infty),
\]
which gives us a system of linear equations. This system has a single
solution, namely
\[
p_{\mathrm{FC}}(\infty) =
\frac{\frac{1}{r^*+r_{r,2}}}{\frac{1}{r^*+r_{r,2}}+
  \frac{M^1_{i,j}}{r^*+r_{r,1}}+\frac{M^2_{i,j}}{r^*+r_{r,0}}} =
\Pr(C_{i,j}\,|\,C_{i-1,j} \cap C_{i,j-1}),
\]
where $C_{i,j}$ denotes the event of the correct tile being frozen at
site $(i,j)$.

The assembly process can be thought of as a sequence $(a_1, a_2, \ldots, a_N)$
of tile addition steps where $a_k = (i_k,j_k)$, $k =
1,2,\ldots,N$, denotes a tile being frozen at site $(i_k,j_k)$. Due to
the fact that the assembly process of the tile systems considered here
proceeds uniformly from south-west to north-east, we have that
$\{(i_k-1,j_k), (i_k,j_k-1)\} \subseteq \{a_1, a_2, \ldots, a_{k-1}\}$
for all $a_k = (i_k,j_k)$. We assume that tiles elsewhere in the
configuration do not affect the probability. Now we can compute the
probability of a finite-size pattern of size $N$ assembling without
any errors, i.e.\ the reliability of that pattern:
\begin{align*}
  \Pr(\mbox{correct pattern})
  & = \Pr(C_{a_1} \cap C_{a_2} \cap \dotsb \cap C_{a_N}) \\
  & = \Pr(C_{a_1}) \Pr(C_{a_2}\,|\,C_{a_1}) \dotsm \Pr(C_{a_N}\,|\,C_{a_1}
   \cap C_{a_2} \cap \dotsb \cap C_{a_{N-1}}) \\
  & = \prod_{i,j} \Pr(C_{i,j}\,|\,C_{i-1,j} \cap C_{i,j-1}).
\end{align*}

We have computed the probability in terms of $G_{mc}$ and
$G_{se}$. Given the desired assembly rate, we want to minimise the
error probability by choosing values for $G_{mc}$ and $G_{se}$
appropriately. If the assembly process is allowed to take $t$ seconds,
the needed assembly rate for an $m \times n$ pattern is approximately
$r^* = \frac{\sqrt{m^2+n^2}}{t}$.
In order to simplify the computations, we use the approximation
\[
\Pr(C_{i,j}\,|\,C_{i-1,j} \cap C_{i,j-1}) =
\frac{\frac{1}{r^*+r_{r,2}}}{\frac{1}{r^*+r_{r,2}}+
\frac{M^1_{i,j}}{r^*+r_{r,1}}+\frac{M^2_{i,j}}{r^*+r_{r,0}}} \approx
\frac{1}{1+M^1_{i,j}\frac{r^*+r_{r,2}}{r^*+r_{r,1}}}.
\]
For small error probability and $2G_{se} > G_{mc} > G_{se}$,
\[
\Pr(\neg C_{i,j}\,|\,C_{i-1,j} \cap C_{i,j-1}) \approx
M^1_{i,j}\frac{r^*+r_{r,2}}{r^*+r_{r,1}} \approx
M^1_{i,j}e^{-(G_{mc}-G_{se})} \eqqcolon
M^1_{i,j}e^{-\triangle G}.
\]
From
\[
r^* = r_f - r_{r,2} = \hat{k}_f(e^{-G_{mc}}-e^{-2G_{se}})
\]
we can derive
\[
G_{se} = -\frac{1}{2}\log(e^{-G_{mc}}-\frac{r^*}{\hat{k}_f}).
\]
Now we can write $\triangle G$ as a function of $G_{mc}$:
\[
\triangle G(G_{mc}) = G_{mc} - G_{se} =
G_{mc} + \frac{1}{2}\log(e^{-G_{mc}}-\frac{r^*}{\hat{k}_f}).
\]
We find the maximum of $\triangle G$, and thus the minimal error probability, by
differentiation:
\[
G_{mc} = -\log(2\frac{r^*}{\hat{k}_f}).
\]
Thus, if the assembly time is $t$ seconds, the maximal reliability is achieved
at
\[
G_{mc} = -\log(2\frac{\sqrt{m^2+n^2}}{t\hat{k}_f}),\qquad
G_{se} = -\frac{1}{2}\log(\frac{\sqrt{m^2+n^2}}{t\hat{k}_f}).
\]

\subsection{Results}\label{sec:results-reliability}

In this section, we present results on computing the reliability of
tile sets using the method given above. We assume that the assembly process
takes place in room temperature (298 K). As a result, we use the value
$k_f=A_fe^{-E_f/RT} \approx 6 \cdot 10^5\mbox{ /M/sec}$ for the
forward reaction rate.

Figure~\ref{fig:rel-srp4t} shows the reliability of the 4-tile
solution to the Sierpinski triangle pattern as a function of pattern
size, using five distinct assembly times. As is to be expected, the
longer the assembly time, the better the reliability.

\begin{figure}[p]
  \centering
  \subfigure[]{
    \includegraphics[width=0.47\textwidth]{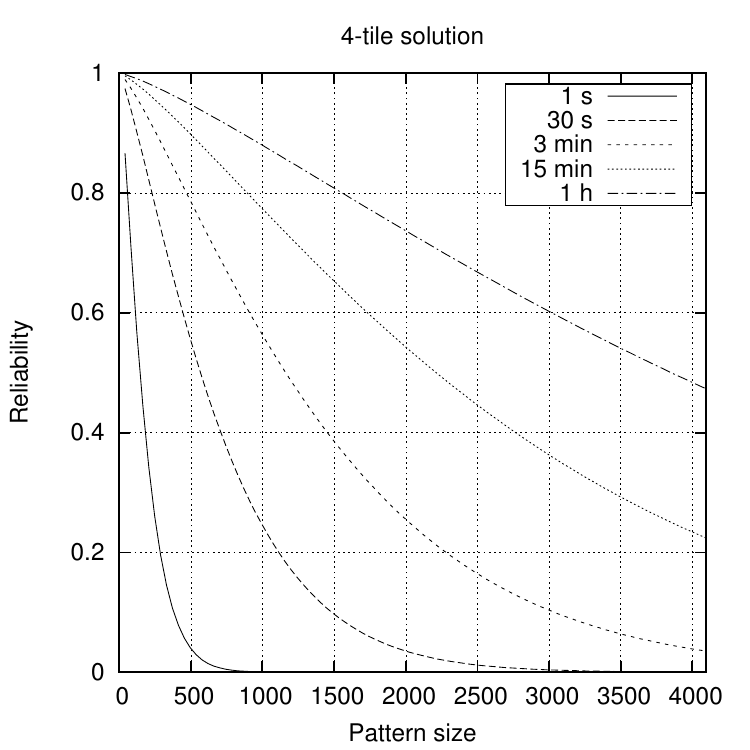}
    \label{fig:rel-srp4t}
  }
  \subfigure[]{
    \includegraphics[width=0.47\textwidth]{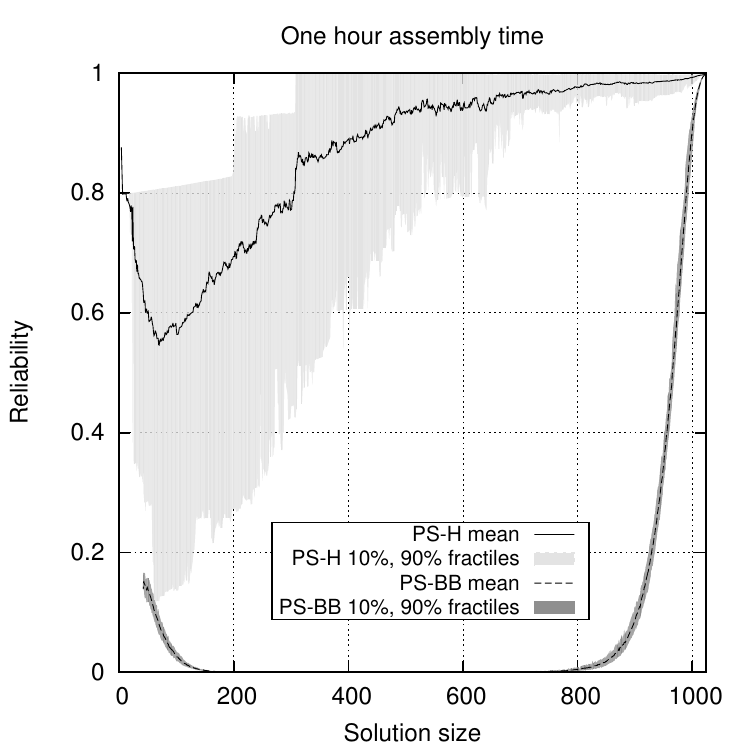}
    \label{fig:rel-srp32-1h}
  }
  \subfigure[]{
    \includegraphics[width=0.47\textwidth]{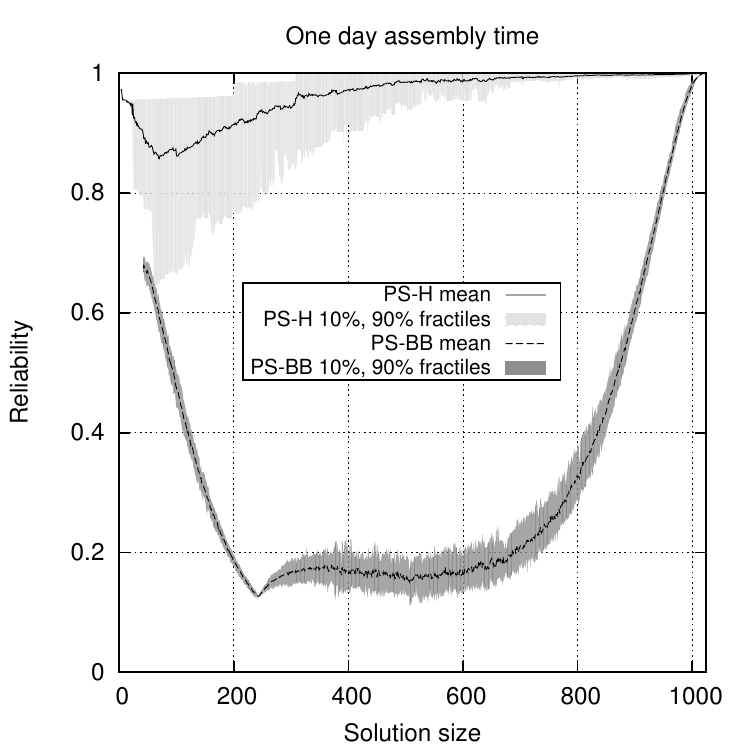}
    \label{fig:rel-srp32-1d}
  }
  \subfigure[]{
    \includegraphics[width=0.47\textwidth]{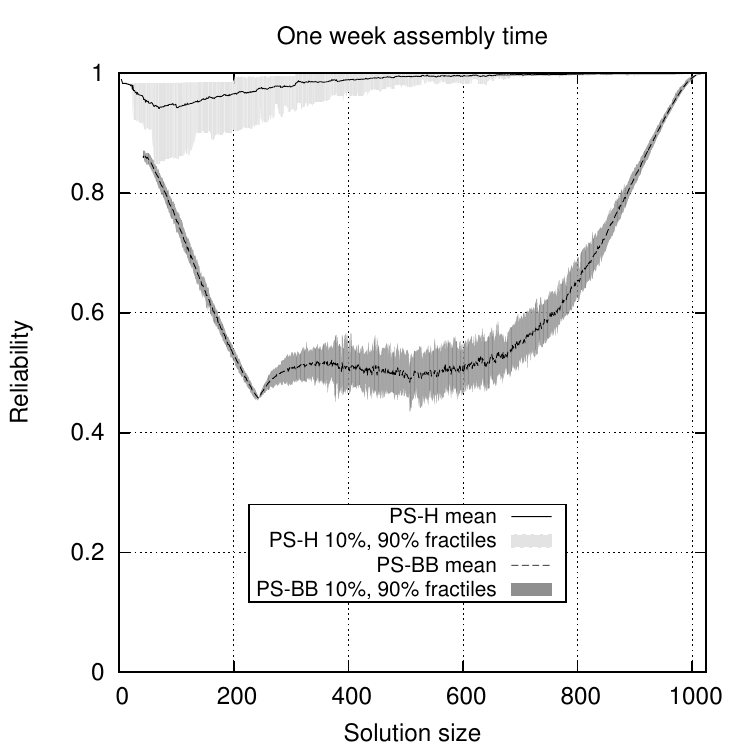}
    \label{fig:rel-srp32-1w}
  }
  \caption{\subref{fig:rel-srp4t} The reliability of the minimal tile set as a
  function of pattern size for the Sierpinski triangle pattern, using several
  different assembly times.
  \subref{fig:rel-srp32-1h}--\subref{fig:rel-srp32-1w} The reliability of
  solutions for the $32\times32$ Sierpinski triangle pattern found by the PS-H and
  PS-BB algorithms, allowing assembly time of one hour, one day and one
  week.}
  \label{fig:reliability}
\end{figure}

We also applied the method for computing the reliability to tile sets
found by the partition-search algorithms. Our results show that the
heuristics used in the PS-H algorithm improve not only the size of the
tile sets found, but also the reliability of those tile sets. This can
be easily understood by considering the following: The reliability of
a tile set is largely determined by the number of tile types that have
the same glue as some other tile type on either one of their input
edges. Since the PS-H algorithm prefers merging class pairs with
common glues, it reduces the number of such tile types effectively.

Figures \ref{fig:rel-srp32-1h}--\ref{fig:rel-srp32-1w} present the
reliability of tile sets found by the PS-H and PS-BB algorithms for
the $32\times32$ Sierpinski triangle pattern, with assembly times of
one hour, one day (24 hours) and one week.  The runs were repeated 100
times; the mean reliability of each tile set size as well as the 10th
and 90th percentiles are shown.

As for reliability, we expect a large set of runs of the PS-BB
algorithm to produce a somewhat decent sample of all the possible tile
sets for a pattern.  Based on this, large and small tile sets seem to
have a high reliability while medium-size tile sets are clearly less
reliable on average. This observation reduces the problem of finding
reliable tile sets back to the problem of finding small tile sets.
However, it is important to note that artefacts of the algorithm may
have an effect on the exact reliability of the tile sets found.

\section{Conclusions}
\label{sec:conclusions}
We have investigated several algorithmic approaches towards an
efficient solution to the PATS problem, i.e.\ the task of finding
minimal tile sets which would self-assemble into a given $k$-coloured
pattern starting from a bordering seed structure.

Our first algorithm is an exhaustive branch-and-bound method (PS-BB)
which makes use of a search tree in the lattice of grid
partitions. Given enough time, the algorithm finds a provably minimal tile set
for any pattern. Numerical experiments indicate that the
PS-BB algorithm is able to find minimal tile sets for randomly
generated binary patterns of sizes up to $6 \times 6$ tiles.  However,
for larger patterns, the search space becomes too large for a complete
exploration, even with the efficient pruning methods applied by the
algorithm.

In a second approach, we addressed the relaxed objective of generating small
but not necessarily minimal tile sets.  Here our PS-H algorithm applies
heuristic rules for optimising the order in which the search space of
pattern-consistent tile sets is explored. Experimental results show that for
most patterns, the PS-H algorithm is indeed able to find significantly smaller
solutions than the PS-BB algorithm, in a reasonable amount of time.

In a third direction, we also considered solving the PATS problem
using logic programming techniques, specifically the Answer Set
Programming (ASP) method. For patterns having small optimal solutions,
our chosen ASP solver is mostly very successful in discovering these
solutions; however the running time of the solver seems to increase
rapidly with the size of the minimum solution.

On a supporting topic, we used the kinetic Tile Assembly Model to assess the
reliability of various tile sets generated by the PS-BB and PS-H algorithms,
i.e.\ their probability of assembling the desired target pattern in an
error-free manner. In comparison to the PS-BB approach, we find that the
heuristics used in the PS-H algorithm improve also the reliability of tile sets
found. In addition, we observed that large and small tile sets seem to have a
high reliability, while medium-size tile sets are clearly less reliable on
average.

One research question still open is the \textsf{NP}-hardness of the PATS
problem restricted to 2-colour patterns. As for new solving methods, further
work could include developing polynomial-time approximation algorithms. The
declarative approach could possibly be applied to instances with larger optimal
solutions by developing a more efficient ASP or propositional satisfiability
encoding.

\section*{Acknowledgements}
We thank the anonymous reviewers for their helpful feedback. The
demanding numerical computations were performed on the Triton
computing cluster provided by the Aalto University Science-IT
programme.

\bibliographystyle{plainnat}
{\small\bibliography{sources}}

\newpage
\appendix

\section{A Design Framework for Carbon Nanotube Circuits Affixed on DNA Origami Tiles}
\label{sec:cnfet}
Recent years have witnessed a burst of experimental activity
concerning algorithmic self-assembly of nanostructures, motivated at
least in part by the potential of this approach as a radically new
manufacturing technology. One of the presently most reliable
self-assembling, programmable nanostructure architectures is DNA
origami \cite{Roth06}. Several authors have announced the formation of
DNA origami tiles, capable of further assembly into larger, fully
addressable, 1D and 2D scaffolds \cite{ESKH10,KSML10,LZWS11}. Such
scaffolds make possible the construction of highly complex structures
on top of them \cite{KuLT09}, prospectively including nanocircuits.
In Czeizler et al.~\cite{CzLO11}, we proposed a generic framework for the design of
Carbon Nanotube Field Effect Transistor (CNFET) circuits. The elements
of these circuits are Carbon Nanotube Field Effect Transistors and
Carbon Nanotube Wires. They are placed on top of different DNA origami
tiles which self-assemble into any desired circuit.

Single-wall carbon nanotubes (CNs) can be fabricated as either
metallic (m) or semiconducting (s). A cross-junction between an m-type
and an s-type CN generates a structure with field effect transistor
(FET) behaviour \cite{DJCP04,LSLP06}. In this way, both p-type and
n-type FETs are realisable (a p-type FET is ON when input is ``0'',
while an n-type FET is ON when input is ``1''). Moreover, experimental
implementations have been provided, affixing these structures on top
of DNA origami \cite{EKKT11,MHBB10}.

\begin{figure}[ht]
  \centering
\includegraphics[width=11cm]{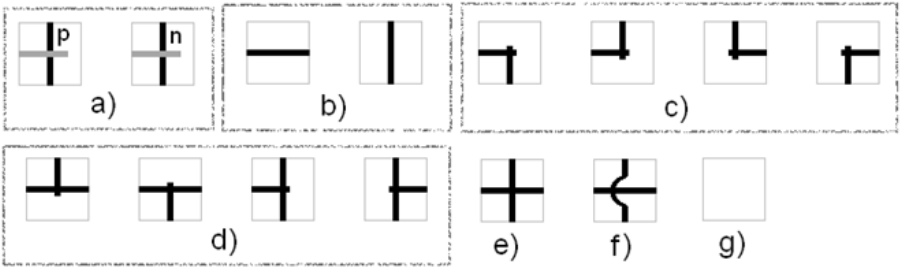}
  \caption{The 14 tile types and the blank tile, out of which any
CNFET circuit can be assembled: (a) p-type and n-type CNFETs, (b) straight CNWs, (c)~corner CNWs, (d)--(e) 3-way and 4-way CNW junctions, (f) crossing but non-interacting CNWs and (g) blank tile.}
  \label{fig:tiles}
\end{figure}

Based on the above experimental results, we provided in Czeizler et al.~\cite{CzLO11}
a ``universal'' set of 14 functionalised DNA origami tiles, such that,
with a proper selection of ``glues'' on the tiles, any desired CNFET
circuit can be self-assembled from this basis. These tile types are
presented in Figure~\ref{fig:tiles} (the marks on the tiles indicate
the arrangements of the CNs affixed on the respective DNA origami):
(a) p-type and n-type CNFETs, (b) straight (horizontal or vertical) CN
wires (CNWs), (c) corner CNWs, (d)--(e) 3-way and 4-way junction CNWs
and (f) crossing but non-interacting CNWs. Additionally, when
analysing fault tolerant architectures, it is convenient to introduce
also (g) a blank tile. In order to design a particular nanocircuit,
one first prepares the transistor circuit design using the 14 basis
tiles indicated.  Then, an optimal number of glues for these tiles is
computed and finally, appropriate ``sticky end'' sequences for
implementing the glues are designed for the DNA origami tiles. In
Figure~\ref{fig:cnfexp} we present the designs for a CMOS inverter,
NAND gate and full adder.

\begin{figure}[ht]
  \centering
  \includegraphics[width=13cm]{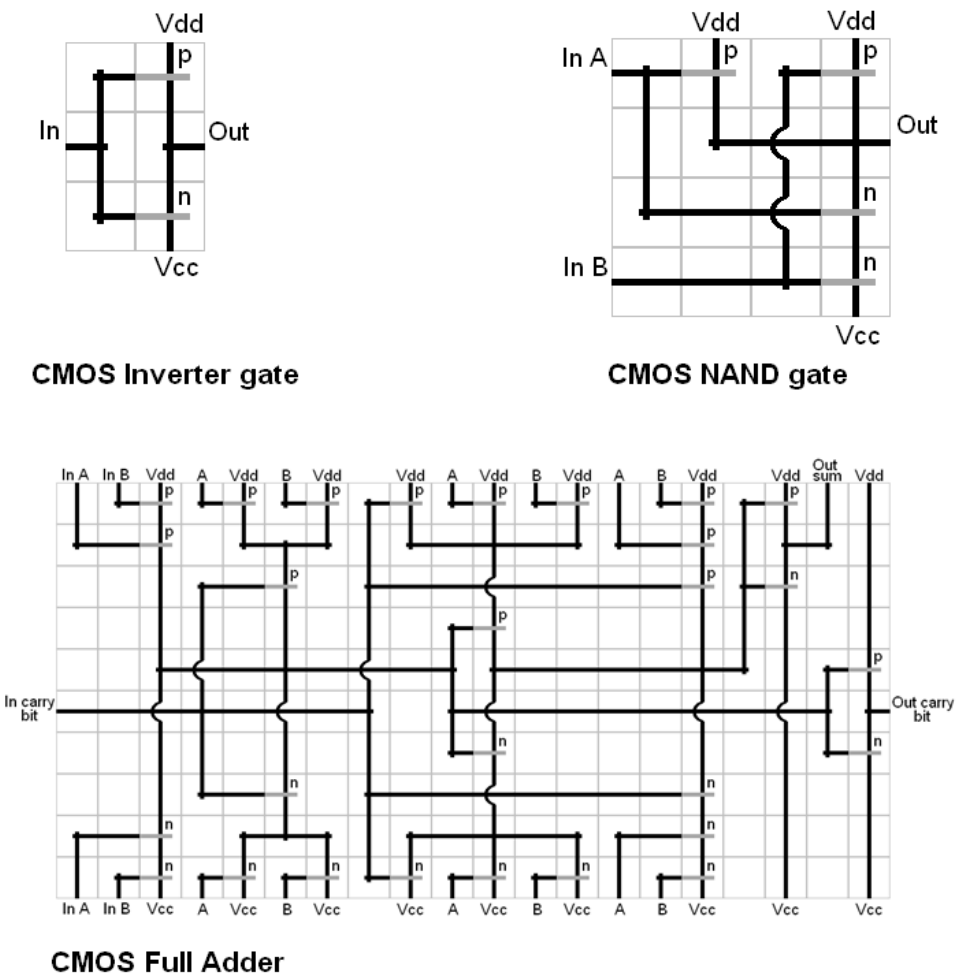}
  \caption{Examples of CNFET circuit design: an inverter gate, a NAND gate and a full adder.}
  \label{fig:cnfexp}
\end{figure}

Some of the advantages of this approach are that it decouples the
self-assembly aspects of the manufacturing process from the transistor
circuit design and that it allows for a structured and clear circuit
design. Moreover, it also supports efficient high-level analysis of
the purported circuits, both by computer simulations and by analytical
means. For instance, all assembly errors can at this level be treated
as tiling errors, leading to a transparent design discipline for
fault-tolerant architectures.

\end{document}